\documentclass[journal,10pt,twocolumn]{IEEEtran}
\usepackage{amsmath,amsfonts}
\usepackage{setspace}
\setstretch{0.96}
\usepackage{amssymb}
\usepackage{amsthm}
\usepackage{algorithm}
\usepackage{algorithmicx}
\usepackage{algpseudocode}
\usepackage{array}
\usepackage{textcomp}
\usepackage{stfloats}
\usepackage{url}
\usepackage{verbatim}
\usepackage{graphicx}
\usepackage{cite}
\usepackage{dcolumn}
\usepackage{bm}
\usepackage{xcolor}
\usepackage{qcircuit}
\usepackage{braket}
\usepackage{physics}
\usepackage{float}
\usepackage{mathtools}
\usepackage{dirtytalk}
\usepackage{hyperref}
\usepackage{newlfont}
\usepackage{array}
\usepackage{setspace}
\usepackage{subfigure}
\newtheorem{theorem}{Theorem}

\newtheorem{proposition}{Proposition}

\newtheorem{remark}{Remark}

\algnewcommand{\algorithmicand}{\textbf{ and }}
\algnewcommand{\algorithmicor}{\textbf{ or }}
\algnewcommand{\OR}{\algorithmicor}
\algnewcommand{\AND}{\algorithmicand}

\usepackage[T1]{fontenc}

\hyphenation{op-tical net-works semi-conduc-tor IEEE-Xplore}

\begin{document}

\title{\huge Reconfigurable Intelligent Surface (RIS)-Assisted Entanglement Distribution in FSO Quantum Networks}

\author{{Mahdi Chehimi, Mohamed Elhattab, Walid Saad, Gayane Vardoyan, Nitish K. Panigrahy, Chadi Assi, and Don Towsley}\vspace{-1cm}
\thanks{Mahdi Chehimi and Walid Saad are with the Bradley Department of Electrical and Computer Engineering, Virginia Tech, USA. e-mail: (mahdic@vt.edu, walids@vt.edu)}
\thanks{Mohamed Elhattab is with the ECE Department, Concordia University, Montreal, QC, Canada and Chadi Assi is with the CIISE Department, Concordia University, Montreal, QC, Canada. email: (m\_elhatt@encs.concordia.ca, chadi.assi@concordia.ca)}
\thanks{Gayane Vardoyan is with QuTech and EEMCS, Delft University of Technology, the Netherlands, and MCICS, University of Massachusetts Amherst, MA, USA. e-mail: gvardoyan@cs.umass.edu.}
\thanks{Nitish K. Panigrahy and Don Towsley are with the Manning College of Information and Computer Sciences, University of Massachusetts Amherst, USA. e-mail: (nitish@cs.umass.edu, towsley@cs.umass.edu)}

\thanks{This research was supported by the U.S. National Science Foundation under Grant CNS-2030215.}}


\markboth{IEEE Transactions on Wireless Communications}%
{Shell \MakeLowercase{\textit{et al.}}: A Sample Article Using IEEEtran.cls for IEEE Journals}



\maketitle

\begin{abstract} 
Quantum networks (QNs) relying on free-space optical (FSO) quantum channels can support quantum applications in environments wherein establishing an optical fiber infrastructure is challenging and costly. However, FSO-based QNs require a clear line-of-sight (LoS) between users, which is challenging due to blockages and natural obstacles. In this paper, a \emph{reconfigurable intelligent surface (RIS)}-assisted FSO-based QN is proposed as a cost-efficient framework providing a virtual LoS between users for entanglement distribution. A novel modeling of the quantum noise and losses experienced by quantum states over FSO channels defined by atmospheric losses, turbulence, and pointing errors is derived. Then, the joint optimization of entanglement distribution and RIS placement problem is formulated, under heterogeneous entanglement rate and fidelity constraints. This problem is solved using a simulated annealing metaheuristic algorithm. Simulation results show that the proposed framework effectively meets the minimum fidelity requirements of all users' quantum applications. This is in stark contrast to baseline algorithms that lead to a drop of at least $83\%$ in users’ end-to-end fidelities. The proposed framework also achieves a $64\%$ enhancement in the fairness level between users compared to baseline rate maximizing frameworks. Finally, the weather conditions, e.g., rain, are observed to have a more significant effect than pointing errors and turbulence.
\end{abstract}

\begin{IEEEkeywords}
quantum networks, reflective intelligent surfaces, free-space quantum communication 

\end{IEEEkeywords}

\vspace{-0.45cm}
\section{\label{sec:Indroduction}Introduction}\vspace{-0.1cm}
\IEEEPARstart{Q}{uantum} networks (QNs) will facilitate the distribution of quantum states, or qubits, across distant nodes \cite{chehimi2023roadmap}, thereby unlocking novel applications in quantum sensing \cite{degen2017quantum}, distributed quantum computing \cite{chehimi2023foundations}, and quantum communication protocols such as quantum key distribution (QKD) \cite{chehimi2022physics}. However, practically deploying and operating QNs is nontrivial due to their unique underlying physics, hardware operations, and probabilistic nature compared to classical networks \cite{chehimi2022physics}.

For instance, state-of-the-art QN architectures primarily rely on optical fiber channels for communicating and sharing quantum information. However, in areas with obstacles, natural barriers, or limited infrastructure, setting up a fiber infrastructure can be difficult. In such cases, \emph{free-space optical (FSO)} quantum channels offer a promising alternative. In an FSO channel, quantum optical signals are wirelessly sent through the air between terrestrial stations or through outer space using satellites \cite{pirandola2021limits}. Thus, FSO-based QNs are more flexible and cost-efficient compared to fiber-based QNs. Other advantages of an FSO-based QN include secure connections due to the difficulty of intercepting a light beam without leaving a trace, and high transmission rates by utilizing a wide range of wavelengths and frequencies \cite{liao2017long}.


Despite these advantages, designing an FSO-based QN requires overcoming several major challenges. First, a common feature found in QNs is having a star-shaped architecture with a central \emph{quantum base station (QBS)} that shares \emph{entangled qubits} with connected users. The QBS has limited capacity to generate entangled pairs of qubits, which are fragile resources that have a short coherence time, and this makes the problem of allocating these resources to QN users challenging \cite{chehimi2021entanglement_rate_optimization}. Particularly, entangled qubit allocation entails controlling single-photon sources and quantum memories to ensure that quantum states are generated at sufficient rate and quality, or \emph{fidelity}, to satisfy heterogeneous user requirements. In addition, the most substantial challenge faced by an FSO-based QN is the need for a clear line-of-sight (LoS) between the communicating nodes. Without LoS, due to the highly-focused and directional nature of optical signals, it is impossible to transmit quantum optical signals over an FSO channel between distant nodes \cite{pirandola2021limits}. Indeed, blockages and obstacles can obstruct an FSO quantum link by breaking the LoS connection between QN nodes \cite{alshaer2021hybrid}. Furthermore, establishing and maintaining an LoS in an FSO-based QN becomes more challenging when considering practical environmental effects such as atmospheric losses, turbulence, misalignment, and pointing errors between communicating nodes \cite{hassan2023experimental}. These challenges can result in quantum signal loss and quality degradation. As a result, two research questions arise: \begin{enumerate}
    \item \emph{How do we establish and maintain an LoS connection between distant nodes communicating over an FSO channel in a multi-user QN where blockages and atmospheric effects are present?}

    \item \emph{How do we efficiently allocate available QBS entangled resources to different users so as to fairly maximize their end-to-end (E2E) service rates while satisfying their quality-of-service (QoS) requirements?} 
\end{enumerate} 
\vspace{-0.4cm}
\subsection{Prior Works}
FSO-based QNs have received considerable attention \cite{liao2017long,ntanos2021availability,alshaer2021hybrid,hassan2023experimental} due to their advantages over fiber-based QNs. For instance, the authors in \cite{liao2017long} reported a practical experiment for long distance quantum communications over FSO channels while overcoming weather-dependent challenges and losses. Additionally, the work in \cite{alshaer2021hybrid} considered hybrid MPPM-BB84 QKD deployment over an FSO channel while considering atmospheric turbulence and pointing errors. Moreover, the impacts of atmospheric turbulence on BB84 QKD over FSO quantum channels were discussed in \cite{hassan2023experimental}. Finally, the work in \cite{ntanos2021availability} considered the utilization of decoy quantum states to execute the BB84 QKD protocol over a terrestrial FSO quantum channel while considering several environmental impacts, like atmospheric turbulence, scattering, and rain attenuation. However, all these works, \cite{liao2017long,ntanos2021availability,alshaer2021hybrid,hassan2023experimental}, considered FSO-based QNs with no natural or human-made obstructions that could block the LoS connection, and none of them considered practical scenarios in which LoS is missing in the QN. 


For FSO-based QN scenarios in which a direct LoS link is unavailable, it is possible to establish a \emph{virtual LoS} between the transmitter and receiver nodes. Such a virtual LoS is established by identifying an intermediate point that has a clear LoS with both the transmitter and receiver. Signals are then transmitted from the transmitter node and reflected off this intermediate point, ensuring they reach the receiver node. One way to practically realize such a virtual LoS is through the use of quantum repeaters at the intermediate nodes \cite{pirandola2021limits}. However, such quantum repeaters require the inclusion of a partial or complete quantum FSO transceiver chain, including: entanglement generation sources, entanglement swapping modules, single-photon detectors, quantum memories, signal processing units, and tracking systems \cite{9443170}. These requirements impose significant design complexities, thereby making the use of quantum repeaters an expensive solution. Another strategy involves integrating satellites as relay nodes or optical reflectors within FSO-based QNs with obstructions \cite{hosseinidehaj2018satellite}. While this offers a potential practical solution, it brings its own set of challenges. Primarily, this method substantially elevates operational costs. Additionally, there is an inherent increase in communication delays due to the extended distances the photons must traverse. This greater travel distance necessitates advancements in photon emission technologies to generate higher quality photons, in addition to quantum memories with longer lifetimes.

One promising solution to overcome the LoS challenges of FSO-based QNs in an efficient and cost-effective manner could be through the concept of a \emph{reconfigurable intelligent surface (RIS)}. Indeed, prior works in the field of RIS for classical FSO communications \cite{10130384,9466323,9443170} already demonstrated the potential of using an optical passive RIS to alleviate the LoS requirement for an FSO system and to establish a virtual LoS between communicating nodes. RISs offer notable advantages in terms of energy efficiency and cost-effectiveness, and they do not require a quantum memory. They are comprised of passive elements and can be conveniently deployed on existing infrastructure such as lamp posts and building facades \cite{10130384,9466323,9443170}. Although there is an inherent compatibility between RISs and optical quantum signals, to the best of our knowledge, only two prior works have considered the presence of an RIS in a QN \cite{kundu2023intelligent} and \cite{kisseleff2023trusted}. Nonetheless, the work in \cite{kundu2023intelligent} does not take into account real-world quantum considerations, and the work in \cite{kisseleff2023trusted} is focused on classical communications and makes a strong assumption that the generation of QKD secret keys is always successful. In particular, the work in \cite{kundu2023intelligent} incorporates an RIS in a single point-to-point FSO quantum link without considering an FSO-based QN where an RIS is optimally placed between multiple users. Moreover, the model in \cite{kundu2023intelligent} did not allocate EGR and quantum memory capacity to QN users to satisfy their unique QoS requirements. Additionally, the authors in \cite{kundu2023intelligent} did not analyze how environmental noise affects the \emph{fidelity} of quantum states over FSO links. On the other hand, the authors in \cite{kisseleff2023trusted} considered a multi-user network secured by QKD. However, QKD processes in \cite{kisseleff2023trusted} are assumed to successfully distribute secret keys without any consideration of the limitations associated with the entanglement generation, storage, or distribution operations. Moreover, the impact of environmental effects on quantum signals both in terms of the resulting losses and noise, corresponding to rate and fidelity, is missing in \cite{kisseleff2023trusted}. In general, the works in \cite{kundu2023intelligent} and \cite{kisseleff2023trusted} distribute entangled qubits in FSO-based QNs using classical approaches that are agnostic to quantum state fidelity. This is an impractical entanglement distribution approach since each quantum application, e.g., QKD, can only tolerate a certain amount of error, which can be translated into a minimum average fidelity requirement. These traits lead to quantum-specific effects in FSO signals, particularly in presence of an RIS \cite{chehimi2022physics}. While practical experiments validated the physical realization of FSO-based QNs with reflectors \cite{liao2017long}, \emph{no prior work has explored the integration of an RIS into a multi-user FSO-based QN with diverse QoS requirements while analyzing the resulting quantum noise and loss effects on quantum state fidelity and rate. Additionally, there is a lack of research on allocating EGR in FSO-based QNs with RISs while accounting for environmental factors such as attenuation, turbulence, and pointing errors.}




\vspace{-0.3cm}
\subsection{Contributions}
The main contribution of this paper is a novel RIS-assisted FSO-based terrestrial QN architecture designed to facilitate fair EGR allocation among users with heterogeneous QoS requirements, taking into account practical environmental effects, obstructions, and blockages. In particular, we propose the first model that jointly accounts for the losses and noise in FSO quantum channels due to real-world environmental conditions, analyzing their effects on the rate and fidelity of quantum signals. Towards this goal, we make key contributions:
\begin{itemize}
\item We introduce a novel model of a star-shaped, FSO-based terrestrial QN in the presence of environmental effects, taking into account practical challenges imposed by blockages. The proposed model is the first in the literature to perform quantum EGR allocation while addressing the problem of absent LoS in FSO-based terrestrial QNs.

\item We conduct a comprehensive analysis of the passive RIS integration into FSO-based terrestrial QN architectures. Specifically, we analyze the impact of various environmental factors, including atmospheric loss, turbulence, and pointing errors, on the generation, preservation, distribution, and fidelity of entangled quantum states. To do so, we derive a closed-form expression for the probability of successfully sending an entangled photon over an FSO quantum channel with various environmental effects. We then propose a model for quantum phase noise induced by atmospheric turbulence on entangled photons over an FSO quantum channel, where the quantum noise directly depends on the turbulence strength. This is the first analysis that captures both the cumulative losses and quantum-specific noise encountered in practical FSO quantum channels under environmental effects.

\item We formulate an optimization problem to jointly optimize the RIS placement and QN EGR allocation while guaranteeing fairness among the end users and considering heterogeneous QoS requirements on minimum rate and fidelity for different quantum applications. Then, we propose a metaheuristic simulated annealing algorithm to solve the formulated problem efficiently. 
\end{itemize}

Simulation results show that the proposed framework outperforms the classical, fidelity-agnostic resource allocation frameworks in terms of satisfying minimum fidelity requirements. In contrast to classical fidelity-agnostic frameworks whose users' fidelity remains at least 83\% below the required minimum fidelity, our framework manages to satisfy the fidelity requirements for all users' quantum applications. Additionally, our framework achieves a 64\% enhancement in the fairness level between users compared to rate maximizing frameworks that do not incorporate fairness constraints in their analysis. Furthermore, we show that the E2E distance of QN users has a more severe impact on their E2E rate compared to their quantum application's moderate minimum fidelity requirement. We verify that the proposed framework can serve a high number of users while satisfying their heterogeneous fairness, rate, and fidelity requirements. Finally, we show that weather conditions have the strongest impact among the different environmental effects as changing the weather from being sunny to being rainy could result in around 45\% reduction in the E2E sum rate.

The rest of this paper is organized as follows. Section \ref{section_system_model} describes the system model, its constituents along with their geometry, and the quantum FSO channel model. Then, in Section \ref{sec_III}, we model both losses and noise encountered in the system model, and analyze their effect on rate and fidelity of entangled states. Section \ref{sec_optimization} introduces our proposed optimization framework, while Section \ref{sec_conclusion} draws conclusions.
\vspace{-0.45cm}
\section{System Model}\label{section_system_model}
\begin{figure}[t!]
\centering
\includegraphics[width=\columnwidth]{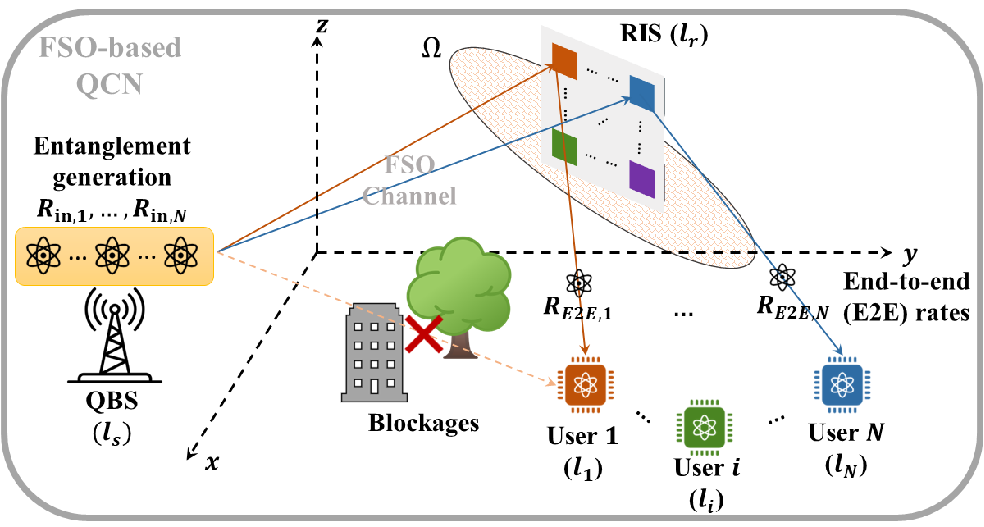}\vspace{-0.1cm}
\caption{Studied system model of an RIS-assisted FSO-based QN with blockages that result in the absence of direct LoS connections between the QBS and end users.}
\label{fig:system_model}\vspace{-0.45cm}
\end{figure}

\subsection{Quantum Network (QN) Elements and Geometry}
As shown in Fig. \ref{fig:system_model}, we consider a star-shaped terrestrial QN comprised of a QBS connected to a set $\mathcal{N}$ of $N$ end users through FSO quantum channels. We assume that end users fall into the dark zone of the QBS, i.e., the direct LoS FSO quantum links between the QBS and end users are blocked or weak due to extreme conditions, such as obstruction by high rise buildings.\footnote{This is a challenging scenario for traditional terrestrial QNs without RIS or satellite support, and it is a typical assumption in the literature of RIS-assisted communication scenarios \cite{10130384,9466323,9443170}.} To circumvent this challenge, an RIS is deployed to provide an alternative FSO quantum link or a virtual LoS between end users and QBS.



In the considered area, the QBS is located at the origin $\textit{\textbf{l}}_s$ = ($x_s$,$y_s$,$H_s$), the RIS is located at $\textit{\textbf{l}}_r$ = ($x_r$,$y_r$,$H_r$), while each end user $i\in\mathcal{N}$, is located at $\textit{\textbf{l}}_i$ = ($x_i$,$y_i$,$H_i$). The QBS is considered to be the source that generates \emph{entangled pairs of qubits (or photons)}, which are shared with the end users to create quantum communication links and enable applications, like QKD. The RIS must be placed so as to optimize the distribution of entangled qubits from the QBS to the users. This is essential to ensure that the quantum applications meet their QoS requirements in terms of the achieved rate and fidelity. Next, we introduce $\Omega$ as the permitted region for deploying the RIS where $\Omega$ satisfies the following condition:
\begin{equation}\label{eq_feasible_RIS_region}
\begin{split}
    \Omega &= \{(x_r, y_r, H_r) \mid
    x_{\min} \leq x_r \leq x_{\max}, \\
    &y_{\min} \leq y_r \leq y_{\max},
    H_{\min} \leq H_r \leq H_{\max}\},
\end{split}
\end{equation}
where $\left[x_{\min}, x_{\max}\right], \left[y_{\min}, y_{\max}\right]$ and $\left[H_{\min}, H_{\max}\right]$ represent the possible deployment ranges along the x-, y-, and z-axes, respectively. The size of $\Omega$, in practice, is usually constrained due to infrastructural considerations, including the availability of appropriate building facades for RIS integration, the availability of LoS links with the QBS and the QN users at available infrastructure, and the distribution of surrounding obstacles in deployment locations, all of which constrain the deployment options. The distance between the QBS and the RIS is $d_{s,r}(\textit{\textbf{l}}_r) = \sqrt{(x_r-x_s)^2+(y_r-y_s)^2+(H_r-H_s)^2}$, and the distance from the RIS to each user $i\in\mathcal{N}$ is $d_{r,i}(\textit{\textbf{l}}_r) = \sqrt{(x_i-x_r)^2+(y_i-y_r)^2+(H_i-H_r)^2}$. Accordingly, the \emph{end-to-end (E2E)} distance between the QBS and a user $i\in\mathcal{N}$ is $d_{\text{E2E},i}(\textit{\textbf{l}}_r) = d_{s,r}(\textit{\textbf{l}}_r) + d_{r,i}(\textit{\textbf{l}}_r)$.

The RIS can dynamically adjust the phase shifts of its individual elements to create a virtual LoS and to direct photons sent from the QBS toward the different end users. For simultaneously serving different users, the RIS is partitioned into $N$ sub-surfaces, each of which consists of fixed units of RIS elements and is dedicated to a particular user $i\in\mathcal{N}$ (e.g., see \cite{9976948}). The phase shift of each sub-surface is adjusted according to the location of its corresponding end user.\footnote{This can be achieved by adopting one of the existing techniques for RIS phase control like the one proposed in \cite{9443170}.} This adjustment ensures that the entangled photons generated and transferred from the QBS are appropriately directed towards their intended user $i\in\mathcal{N}$, as shown in Fig. \ref{fig:system_model}. The relative distances between the RIS and both QBS and end users are significantly larger than the distances between the RIS partitions. Henceforth, the QN operation is performed at the far-field regime of the RIS, which is a common assumption \cite{9976948}. Hence, it is reasonable to represent the location of the RIS by a single point $\textit{\textbf{l}}_r$ = ($x_r$,$y_r$,$H_r$).

\vspace*{-0.4cm}
\subsection{Quantum Channel Model}\label{sec_quantum_channel_Model}

The E2E FSO channel gain $h_i,~\forall i\in \mathcal{N}$ between the QBS and end user $i\in\mathcal{N}$, is given by \cite{9443170}:

\vspace*{-0.3cm}
\begin{equation}\label{eq_channel}
    h_i(\textit{\textbf{l}}_r) = \varsigma \eta h_{i}^\text{p}(\textit{\textbf{l}}_r) h_{i}^\text{a}(\textit{\textbf{l}}_r) h_{i}^\text{g}(\textit{\textbf{l}}_r), \quad \forall i \in \mathcal{N},
\end{equation}
where $\varsigma$ is the RIS reflection efficiency, i.e., the fraction of power reflected by the RIS, which depends on both the bias voltage applied to the RIS and the operating frequency, $\eta$ is the responsivity of the receiver, and $h_{i}^\text{p}, h_{i}^\text{a}$, and $h_{i}^\text{g}$ are, respectively, the atmospheric loss, atmospheric turbulence, and pointing error loss.

\subsubsection{Atmospheric loss} $h_{i}^\text{p}$ is a deterministic loss and represents the reduction of power along a transmission path caused by the scattering and absorption of light by particles in the atmosphere. Accordingly, the atmospheric loss is \cite{jamali2016link}:
\begin{equation}
\begin{split}
    h_{i}^\text{p}(\textit{\textbf{l}}_r) &= 10^{-\hat{\kappa} d_{s,r}(\textit{\textbf{l}}_r)/10}\times 10^{-\hat{\kappa} d_{r,i}(\textit{\textbf{l}}_r)/10} \\
    &= 10^{-\hat{\kappa} d_{\mathrm{E2E},i}(\textit{\textbf{l}}_r)/10},\quad\forall i \in\mathcal{N},
\end{split}
\label{pathloss}
\end{equation}
where $\hat{\kappa}$ is the weather-dependent attenuation coefficient of the FSO link. From \eqref{pathloss}, we observe that the atmospheric loss $h_{i}^\text{p}$ consists of a product of two terms, namely the atmospheric loss of the QBS-RIS link $10^{-\hat{\kappa }d_{s,r}(\textit{\textbf{l}}_r)/10}$ and the atmospheric loss of the RIS-user link $10^{-\hat{\kappa} d_{r,i}(\textit{\textbf{l}}_r)/10}$.

\subsubsection{Atmospheric turbulence} Under moderate-to-strong atmospheric turbulence, $h_{i}^\text{a}$ can be modeled by a Gamma-Gamma distribution, and, hence, the probability density function (PDF) of $h_{i}^\text{a}$ is \cite{khalighi2014survey}:\vspace{-0.4cm}

\begin{equation}\small
\begin{split}
    f_{h_{i}^\text{a}}(h_{i}^\text{a}) &= \frac{2(\alpha_i\beta_i)^{(\alpha_i+\beta_i)/2}}{\Gamma(\alpha_i)\gamma(\beta_i)}\times (h_{i}^\text{a})^{(\alpha_i+\beta_i)/2-1} \\
    &\times K_{\alpha_i-\beta_i}(2\sqrt{\alpha_i\beta_i h_{i}^\text{a}}),
\end{split}
\label{TurbPDF}
\end{equation}
where $\Gamma(\cdot)$ is the Gamma function and $K_\gamma$ is the $\gamma$-order modified Bessel function of the second kind. $\beta_i$ and $\alpha_i$ represent, respectively, the effective number of large-scale and small-scale cells of the scattering process that depend on the RIS location, and are given by \cite{khalighi2014survey}:\vspace{-0.4cm}


\begin{equation}\small
\begin{split}
\alpha_i(\mathbf{l}_r) &= \left[ \text{exp}\left( \frac{0.49\sigma_{R,i}^2(\mathbf{l}_r)}{\left(1+1.11\sigma_{R,i}^{\frac{12}{5}}(\mathbf{l}_r)\right)^{\frac{7}{6}}}\right)-1\right]^{-1},
\end{split}
\label{eq:alpha}
\end{equation}

\begin{equation}\small
\begin{split}
\beta_i(\mathbf{l}_r) &= \left[ \text{exp}\left( \frac{0.51\sigma_{R,i}^2(\mathbf{l}_r)}{\left(1+0.69\sigma_{R,i}^{\frac{12}{5}}(\mathbf{l}_r)\right)^{\frac{5}{6}}}\right)-1\right]^{-1},
\end{split}
\label{eq:beta}
\end{equation}
where $\sigma_{R,i}^2(\textit{\textbf{l}}_r) = 1.23C_n^2k^{7/6}d_{\text{E2E},i}^{11/6}(\textit{\textbf{l}}_r)$ is the \emph{Rytov variance} which captures phase fluctuations due to turbulence. Moreover, $k = 2\pi/\lambda$ is the optical wave number, $\lambda$ is the wavelength, and $C_n^2$ is the refractive index structure constant, which quantifies the turbulence strength.

\subsubsection{Pointing Error Model} Pointing errors in RIS-assisted free-space terrestrial QNs stem mainly from jitter at the transmitter and at the RIS. Therefore, the fading due to pointing errors at user $i\in\mathcal{N}$ will be given by \cite{10130384,9466323}:
\begin{equation}\footnotesize
    h_{i}^\text{g}(\textit{\textbf{l}}_r) \approx A_{0,i}(\textit{\textbf{l}}_r) \exp\left(- 2 \varrho_i^2(\textit{\textbf{l}}_r)/W^2_{z_{\mathrm{eq},i}}(\textit{\textbf{l}}_r)\right), \quad \forall i \in \mathcal{N}, 
\end{equation}
where $A_{0,i}(\textit{\textbf{l}}_r) = \left[\mathrm{erf}\left(v_i(\textit{\textbf{l}}_r)\right)\right]^2, W^2_{z_{\mathrm{eq},i}}(\textit{\textbf{l}}_r) = \frac{W_{z,i}^2(\textit{\textbf{l}}_r)\sqrt{\pi}\mathrm{erf}\left(v_i(\textit{\textbf{l}}_r)\right)}{2v_i(\textit{\textbf{l}}_r)\exp\left(-v_i^2(\textit{\textbf{l}}_r)\right)},$ and $v_i(\textit{\textbf{l}}_r) = \frac{\sqrt{\pi}r_a}{\sqrt{2}W_{z,i}(\textit{\textbf{l}}_r)}$. Here, $r_a$ is the aperture radius, $W_{z,i}(\textit{\textbf{l}}_r) = \phi_d d_{\textrm{E2E},i}(\textit{\textbf{l}}_r)$ represents the beam width, $\phi_d$ is the beam divergence angle, which are related to the spreading of the photon's wave function. $\varrho_i(\textit{\textbf{l}}_r) = \mathrm{tan}(\theta_{s,i}(\textit{\textbf{l}}_r))d_{r,i}(\textit{\textbf{l}}_r) \approx \theta_{s,i}(\textit{\textbf{l}}_r)d_{r,i}(\textit{\textbf{l}}_r)$ is the instantaneous displacement between the actual receiving point of the beam and the center of the receiver, where $\theta_{s,i}$ is the superimposed pointing error angle constructed by the actual incident point at user $i$, the reflection point at the RIS, and the center of receiver $i$. $\theta_{s,i}(\textit{\textbf{l}}_r) = \sqrt{\theta_{sx,i}^2(\textit{\textbf{l}}_r) + \theta_{sy,i}^2(\textit{\textbf{l}}_r)}$,  where $\theta_{sx,i}(\textit{\textbf{l}}_r) \approx \theta_{x,i} \left(1 + \frac{d_{s,r}(\textit{\textbf{l}}_r)}{d_{r,i}(\textit{\textbf{l}}_r)} + 2\phi_{x,i}\right)$ is the horizontal component of $\theta_{s,i}$ and $\theta_{sy,i}(\textit{\textbf{l}}_r) \approx \theta_{y,i} \left(1 + \frac{d_{s,r}(\textit{\textbf{l}}_r)}{d_{r,i}(\textit{\textbf{l}}_r)} + 2\phi_{y,i}\right)$ is the vertical component of $\theta_{s,i}$. Here, $\theta_{x,i}$ and $\theta_{y,i}$ are random variables that follow a Gaussian distribution with zero mean and variance $\sigma_{\theta}^2$ \cite{9466323}. $\phi_{y,i}$ and $\phi_{x,i}$ are the deflection angles in vertical and  horizontal planes, respectively, and are modeled as Gaussian distributed random variables with zero mean and variance $\sigma_{\phi}^2$. Hence, the PDF of $\theta_{s,i}$ will be \cite{9466323}: \vspace{-0.2cm}
\begin{equation}\small
\begin{split}
    f_{\theta_{s,i}}(\theta_{s,i}) &= \frac{\theta_{s,i}}{\sigma_{\theta}^2\left(1 + \frac{d_{s,r}}{d_{r,i}}\right)^2 + 4\sigma^2_{\phi}} \\
    &\quad \times \exp\left(-\frac{\theta_{s,i}^2}{2\sigma_{\theta}^2\left(1 + \frac{d_{s,r}}{d_{r,i}}\right)^2 + 8\sigma^2_{\phi}}\right).
\end{split}
\end{equation}
\par By applying random variable transformation, the PDF for the pointing errors can be expressed as follows \cite{9466323}:\vspace{-0.2cm}
\begin{equation}\small
    f_{h_{i}^\text{g}}(h_{i}^\text{g}) = \frac{\vartheta_i}{A_{0,i}^{\vartheta_i}}(h_{i}^\text{g})^{\vartheta_i - 1}, \quad \forall i \in \mathcal{N}, 0 \leq h_{i}^\text{g} \leq A_{0,i}, \label{PointingPDF}
\end{equation}
where $\vartheta_i(\textit{\textbf{l}}_r) = \frac{W_{z_{\mathrm{eq}},i}^2(\textit{\textbf{l}}_r)}{4d_{\text{E2E},i}^2(\textit{\textbf{l}}_r)\sigma_{\theta}^2 + 16 d_{r,i}^2(\textit{\textbf{l}}_r) \sigma_{\phi}^2}$.

The FSO channel model in \eqref{eq_channel} encompasses a wide range of practical imperfections and environmental factors—including atmospheric loss, turbulence, and pointing error. These factors influence the behavior of an entangled photon as it traverses an FSO quantum channel and lead to two challenges: 1) \emph{losses}, that affect the \emph{entanglement generation rate (EGR)}, since some transferred photons may get lost during their transmission, and 2) \emph{noise}, that affects the quality, or \emph{fidelity}, of the E2E entangled states due to qubit interactions with their surroundings. To ensure optimal entanglement distribution, it is crucial to comprehensively analyze and model how the different environmental effects influence E2E rate and fidelity. This analysis, presented in the next section, paves the way for efficient resource allocation strategies and aids in identifying the optimal RIS placement location within FSO-based QNs. 

\vspace{-0.2cm}
\section{Entanglement Generation and Distribution}\label{sec_III}
In a star-shaped FSO-based QN, a QBS is equipped with entanglement generation units that produce pairs of entangled qubits, in addition to a \emph{quantum memory} with limited capacity that can store a maximum rate $C_{\text{max}}$ of entangled qubits per second. Each generated entangled pair of qubits consists of: 1) a \emph{matter qubit}, stored in the quantum memory, and 2) a \emph{flying qubit}, or a single photon of light, sent over the FSO quantum channel to the RIS, where it is then reflected towards its designated user.\footnote{We do not presume a specific hardware technology for the entanglement generation process to keep our network model abstract.}

The matter qubit is preserved during the \emph{coherence time} of the quantum memory. Thus, instead of suffering from strong losses that could result in losing matter qubits in the memory, the main challenge in preserving such qubits is quantum noise due to interactions with the quantum memory. Meanwhile, the flying qubit interacts with the FSO channel and suffers from the imperfections and environmental effects, as discussed in Section \ref{sec_quantum_channel_Model}. Accordingly, the flying qubit significantly suffers from both noise and losses. As discussed earlier, losses have a direct effect on the EGR in the QN, while noise impacts the quality of the distributed entangled states. Next, we analyze each one of these effects separately.

\vspace{-0.1cm}
\subsection{Losses and EGR Analysis}\label{sec_losses_EGR}
The role of the QBS is to generate and fairly distribute entangled qubits to end nodes (referred to interchangeably as \say{users} throughout this paper) with the assistance of an RIS to alleviate the absence of LoS to these users. To do so, the QBS generates pairs of entangled qubits dedicated to each user $i\in\mathcal{N}$ at a rate $R_{\text{in},i}$ pairs per second. Since flying qubits suffer several environmental effects and losses, the rate of entangled qubits successfully delivered to the end nodes will be less than the initial EGR. To quantify such losses, we introduce the \emph{probability of success}, $P_{\mathrm{succ},i}$, as a quantitative measure of the probability of successfully sending an entangled photon from the QBS towards user $i\in\mathcal{N}$. This probability of success incorporates the three discussed sources of photon-loss in the FSO quantum channel (i.e., atmospheric loss, turbulence, and pointing error), and it quantifies the probability of the combined channel gain being larger than a predefined threshold $\zeta_{\mathrm{th}}$. Here, the location of the RIS, $\textit{\textbf{l}}_r$, in the QN plays a fundamental role since it directly affects the distance traveled by photons and significantly impacts the environmental parameters. The closed-form expression for $P_{\mathrm{succ},i}$ is derived next.
\begin{theorem}\label{thrm_prob_success}
    The probability of successfully sending a single entangled photon from the QBS to user $i\in\mathcal{N}$ over an FSO quantum channel (given in \eqref{eq_channel}) considering the atmospheric loss, turbulence, and pointing error is given by: 
    \begin{equation}\footnotesize
    \begin{split}
        &P_{\mathrm{succ},i}(\textit{\textbf{l}}_r) = 1-\Bigg(\frac{\vartheta_i(\textit{\textbf{l}}_r)}{\Gamma(\alpha_i(\textit{\textbf{l}}_r))\Gamma(\beta_i(\textit{\textbf{l}}_r))} \\
        & \times G_{2,4}^{3,1} \left[\frac{\alpha_i(\textit{\textbf{l}}_r)\beta_i(\textit{\textbf{l}}_r) \chi_{\mathrm{th}}}{A_{0,i}(\textit{\textbf{l}}_r)h_{i}^\text{p}(\textit{\textbf{l}}_r)} {\Bigg\vert} \begin{matrix} 
            \multicolumn{2}{c}{1,} & \multicolumn{2}{c}{\vartheta_i(\textit{\textbf{l}}_r) +1} \\ 
            \vartheta_i(\textit{\textbf{l}}_r), & \alpha_i(\textit{\textbf{l}}_r), & \beta_i(\textit{\textbf{l}}_r),&0
            \end{matrix}\right]\Bigg),
    \end{split}
    \end{equation}
where $G_{n,m}^{p,q}$ is the Meijer's G-function, $\Gamma(.)$ is the Gamma function, and $\chi_{\mathrm{th}} = \frac{\zeta_{\mathrm{th}}}{\varsigma \eta}$.
 \end{theorem}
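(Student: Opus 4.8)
The plan is to compute $P_{\mathrm{succ},i}(\textit{\textbf{l}}_r) = \Pr\!\big(h_i(\textit{\textbf{l}}_r) \geq \zeta_{\mathrm{th}}\big)$ by first factoring out the deterministic quantities and then integrating the joint density of the two random fading components. Since $h_i = \varsigma\eta\, h_i^{\text{p}}\, h_i^{\text{a}}\, h_i^{\text{g}}$ with $\varsigma$, $\eta$, and $h_i^{\text{p}}$ deterministic, the event $\{h_i \geq \zeta_{\mathrm{th}}\}$ is equivalent to $\{h_i^{\text{a}} h_i^{\text{g}} \geq \chi_{\mathrm{th}}/h_i^{\text{p}}\}$ with $\chi_{\mathrm{th}} = \zeta_{\mathrm{th}}/(\varsigma\eta)$. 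So I would write $P_{\mathrm{succ},i} = 1 - F_{h_i^{\text{a}} h_i^{\text{g}}}\!\big(\chi_{\mathrm{th}}/h_i^{\text{p}}\big)$, where $F$ is the CDF of the product of the turbulence and pointing-error variables. Treating $h_i^{\text{a}}$ (Gamma-Gamma, density \eqref{TurbPDF}) and $h_i^{\text{g}}$ (density \eqref{PointingPDF}) as independent, the product CDF is obtained by conditioning on $h_i^{\text{a}}$ and integrating the pointing-error contribution, or equivalently by using the standard Mellin-convolution / product-of-random-variables technique.

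The key computational steps, in order, are: (i) substitute the two marginal PDFs and set up the double integral for $F_{h_i^{\text{a}} h_i^{\text{g}}}$; (ii) express the modified Bessel function $K_{\alpha_i-\beta_i}(\cdot)$ appearing in \eqref{TurbPDF} in terms of a Meijer $G$-function, namely $K_\nu(2\sqrt{z}) = \tfrac12 G_{0,2}^{2,0}\!\big(z \,\big|\, {-\atop \nu/2,\,-\nu/2}\big)$; (iii) carry out the inner integral over the bounded support $[0, A_{0,i}]$ of $h_i^{\text{g}}$ — this yields an incomplete-type integral that still reduces to elementary powers since $f_{h_i^{\text{g}}}$ is a pure power law; (iv) perform the remaining integral over $h_i^{\text{a}}$ using the standard Mellin–Barnes integral identity for a power times a Meijer $G$-function (e.g.\ the formula $\int_0^\infty x^{s-1} G_{p,q}^{m,n}(\omega x\,|\,\cdot)\,dx$ evaluated as a ratio of Gamma functions, or more precisely the identity that folds an extra power factor and an integration limit into a higher-order $G$-function), which produces the $G_{2,4}^{3,1}$ with the stated argument $\alpha_i\beta_i \chi_{\mathrm{th}}/(A_{0,i} h_i^{\text{p}})$ and parameter lists $(1,\ \vartheta_i+1)$ on top and $(\vartheta_i,\ \alpha_i,\ \beta_i,\ 0)$ on the bottom; (v) collect the prefactor $\vartheta_i/(\Gamma(\alpha_i)\Gamma(\beta_i))$ and subtract from $1$.

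The main obstacle I anticipate is step (iii)–(iv): handling the finite upper limit $A_{0,i}$ of the pointing-error support cleanly. Naively the inner integral produces an incomplete Gamma / lower-incomplete structure, and the trick is to recognize that integrating $(h^{\text{g}})^{\vartheta_i-1}$ against the Bessel/$G$-function kernel up to a finite limit — together with the change of variables $h^{\text{g}} = A_{0,i} t$ — can be rewritten so that the limit appears only through the ratio in the $G$-function argument, at the cost of adding the parameters $1$ (upper) and $\vartheta_i$, $0$ (lower). Equivalently, one can avoid this by noting $h_i^{\text{g}}/A_{0,i}$ is a scaled Beta-type variable and invoking the known closed form for the CDF of a Gamma-Gamma times a bounded power-law pointing-error variable, which is exactly the classical FSO result of \cite{9466323,10130384}; I would cite that product-CDF lemma and simply identify parameters. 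The remaining checks — verifying the Meijer $G$ order $(m,n,p,q)=(3,1,2,4)$ is consistent with the parameter counts, and that the argument scaling by $h_i^{\text{p}}$ (the only place the deterministic atmospheric loss enters) is correct — are routine bookkeeping once the integral identity is applied.
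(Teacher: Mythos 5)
Your proposal is correct and follows essentially the same route as the paper's proof in Appendix~\ref{appendix_proof_Theorem_1}: condition on the Gamma--Gamma turbulence variable, use the bounded power-law pointing-error density, rewrite $K_{\alpha_i-\beta_i}(\cdot)$ as a $G_{0,2}^{2,0}$ Meijer function, and fold the finite-limit CDF integral into the $G_{2,4}^{3,1}$, with the deterministic factors $\varsigma\eta$ and $h_i^{\text{p}}$ entering only through the argument $\alpha_i\beta_i\chi_{\mathrm{th}}/(A_{0,i}h_i^{\text{p}})$. The only cosmetic difference is the order of integration (the paper first obtains the product PDF as a $G_{1,3}^{3,0}$ and then integrates it up to the threshold), which does not change the substance of the argument.
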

\begin{proof}
    See Appendix \ref{appendix_proof_Theorem_1}.
\end{proof}

From Theorem \ref{thrm_prob_success}, we observe that the probability of success significantly depends on the different environmental parameters along with the Meijer G-function (a general function encompassing several special cases). To further analyze the obtained expression, we run simulations, in Fig. \ref{fig_theorem1}, to understand the performance of the Meijer function, and accordingly, the behavior of the probability of success, as different environmental parameters vary. It is evident from Fig. \ref{fig_theorem1} that the probability of success is significantly reduced in harsher weather conditions and under increased turbulence.

The QBS generates entangled qubits at a rate $R_{\mathrm{in},i},\forall i \in\mathcal{N}$, which are then transferred towards their intended users over an FSO quantum channel characterized by the aforementioned losses. Accordingly, the resulting E2E rate of successfully established E2E links, $R_{\text{E2E},i}$, for user $i\in\mathcal{N}$, can be defined as:
\begin{equation}
    R_{\mathrm{E2E},i}(\textit{\textbf{l}}_r,R_{\mathrm{in},i}) = P_{\mathrm{succ},i}(\textit{\textbf{l}}_r) \cdot R_{\mathrm{in},i}.
\end{equation}
Here, the E2E rate directly depends on the initial EGR and the location of the RIS which significantly affects the probability of successfully sending photons from the QBS toward the end nodes.
\begin{figure}[!t]\vspace{-0.2cm} 
\centering
{\centering\includegraphics[width=0.41\textwidth]{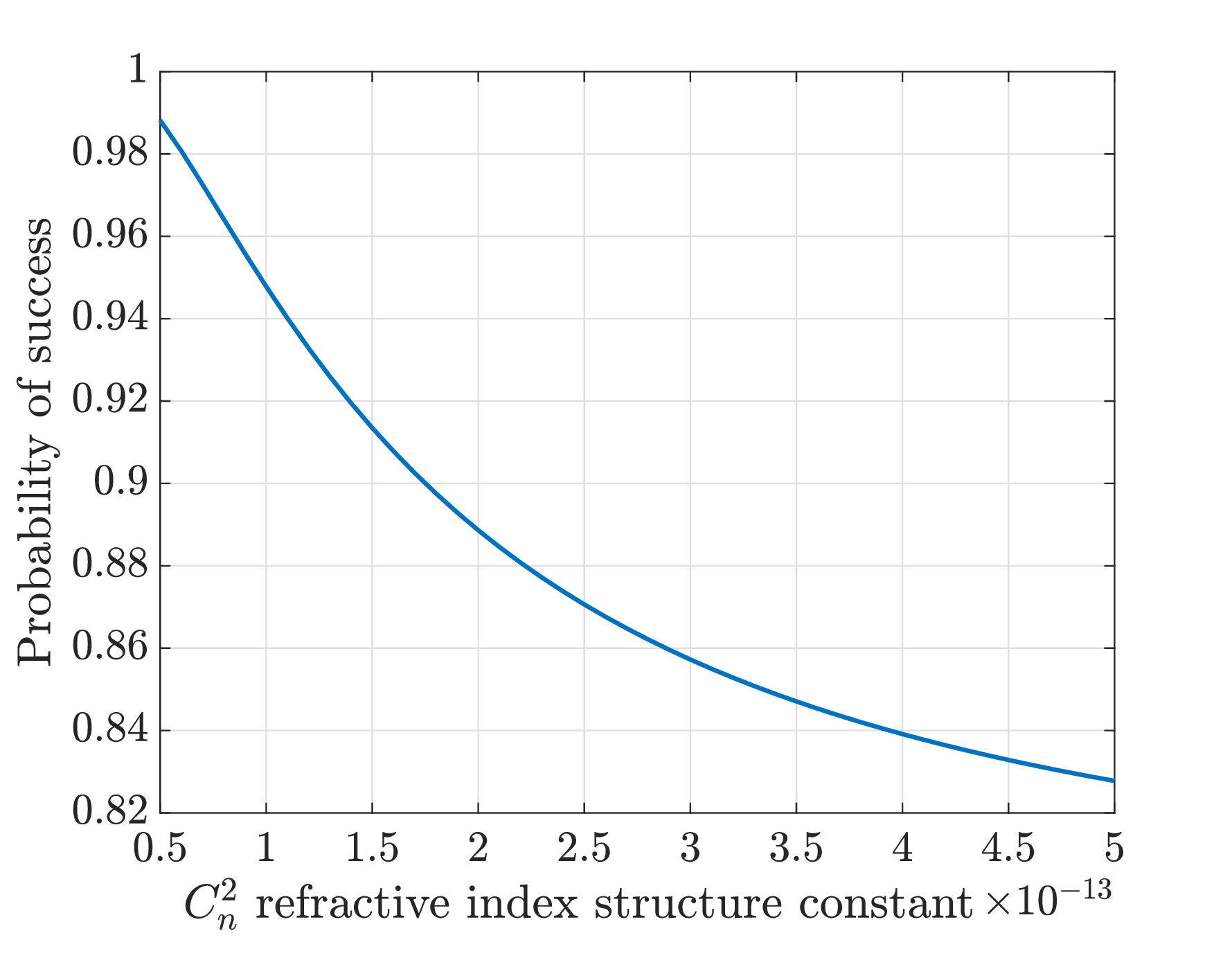}}
{\centering\includegraphics[width=0.41\textwidth]{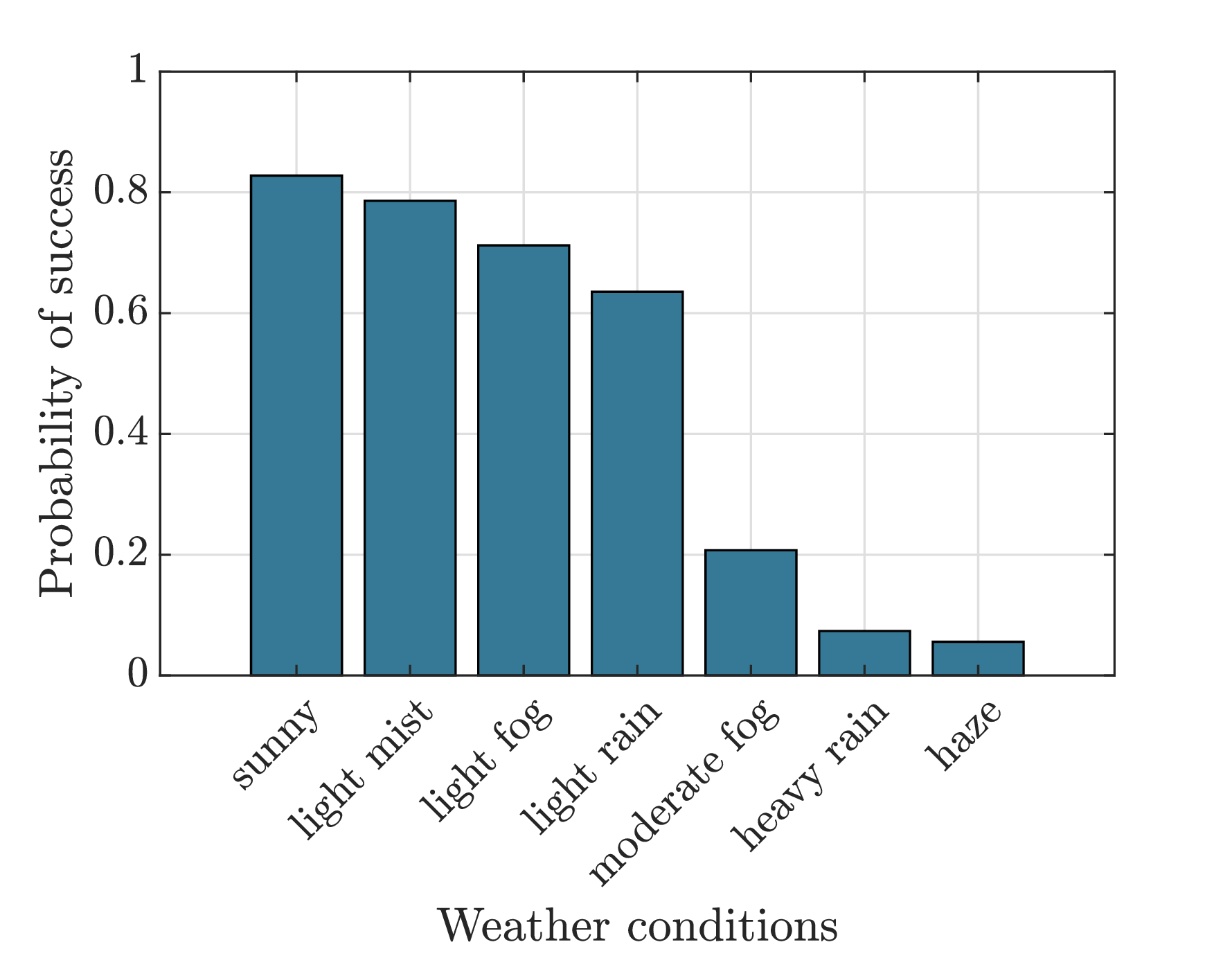}} \vspace{-0.2cm}
\caption{Performance of the proposed probability of success expression in Theorem 1 against turbulence parameter and weather condition.}\label{fig_theorem1}
\vspace{-0.5cm}
\end{figure}

\vspace{-0.2cm}
\subsection{Noise and E2E Entangled State Analysis}\label{sec_quantum_noise}
Each entangled pair of qubits generated by the QBS and dedicated to a user $i\in\mathcal{N}$, is represented by a general Bell-diagonal quantum state, and is given by \cite{renner2008security}:
\begin{equation}\small
    \boldsymbol{\rho}_{\mathrm{BD},i} = \lambda_{00,i}\boldsymbol{\Phi}_{00} + \lambda_{01,i}\boldsymbol{\Phi}_{01} + \lambda_{10,i}\boldsymbol{\Phi}_{10} + \lambda_{11,i}\boldsymbol{\Phi}_{11},
\end{equation}
where $\boldsymbol{\Phi}_{mn} = \ketbra{\Phi_{mn}}{\Phi_{mn}}$, and $\ket{\Phi_{mn}} = {\boldsymbol{\sigma}^m_X}{\boldsymbol{\sigma}^n_Z}\otimes \boldsymbol{I}\ket{\Phi^+}$. Here, $\boldsymbol{\sigma}_X$ and $\boldsymbol{\sigma}_Z$ are the Pauli X- and Z-gates \cite{renner2008security}, $\boldsymbol{I}$ is the identity operator, $m,n\in\{0,1\}$, and $\ket{\Phi^+} = \frac{1}{\sqrt{2}}(\ket{00} + \ket{11})$. Moreover, $\lambda_{00,i} + \lambda_{01,i} + \lambda_{10,i} + \lambda_{11,i} = 1$, and the fidelity of such Bell-diagonal states, with respect to $\Phi_{00}$, is $\bra{\Phi_{00}}\boldsymbol{\rho}_{\mathrm{BD},i}\ket{\Phi_{00}} = \lambda_{00,i}$.

In order for an entangled qubit, successfully distributed from the QBS towards its intended user, to be useful, it must be maintained at a high quality, or \emph{fidelity}. In general, the quality of entangled pairs of qubits is mainly affected by noise that each qubit from the pair experiences. Hence, to quantify the overall quality of distributed E2E entangled pairs, we must model and analyze the noise experienced by both the \emph{matter qubit}, during its storage in the quantum memory of the QBS, and the \emph{flying qubit}, during its transmission over an FSO quantum channel with environmental effects. Analysis of both noise channels allows us to derive the resulting E2E entangled state to identify its quality.

First, the matter qubit stored in the quantum memory of the QBS, is stored for a duration equal to the travel time of the transferred photon plus the the required processing time. Thus, the storage time is $t_i(\textit{\textbf{l}}_r) = \frac{d_{\text{E2E},i}(\textit{\textbf{l}}_r)}{c} + T_{\text{proc}}$, where $c$ is the speed of light, $T_{\text{proc}}$ is the time needed to process a qubit after successfully transferring it to its user. We observe that the RIS location, $\textit{\textbf{l}}_r$, fundamentally affects the E2E distance, $d_{\text{E2E},i}(\textit{\textbf{l}}_r)$, that the flying qubit travels over the FSO channel, and, thus, affects the storage time of the matter qubit. During this storage time, the matter qubit suffers from \emph{quantum memory decoherence}, which we model as a \emph{depolarizing quantum noise channel}, that is the most general, worst-case quantum noise channel. The time-dependent depolarizing quantum noise channel model for a single-qubit state, represented by the density matrix $\rho$, for user $i\in\mathcal{N}$ can be written as \cite{dahlberg2019link}:
\begin{equation}\label{eq_depolarizing}
\boldsymbol{\Lambda}_{1,i}((\textit{\textbf{l}}_r)) = (1-4p_{1,i}(\textit{\textbf{l}}_r))\boldsymbol{\rho} + 4p_{1,i}(\textit{\textbf{l}}_r)\frac{\boldsymbol{I}}{2},
\end{equation}
where $p_{1,i}(\textit{\textbf{l}}_r) = \frac{1}{4} (1-e^{-\frac{t_i(\textit{\textbf{l}}_r)}{T}})$ is the depolarizing probability for the stored qubit for user $i\in\mathcal{N}$ in the QBS memory, $\boldsymbol{I}$ is the identity operator, $t_i$ is the qubit storage time, and $T$ is the coherence time of the quantum memory.\footnote{The quantum memory coherence time is a hardware characteristic of the quantum memory storing qubits. In our model, we assume that all users have the same hardware with similar coherence times.} It is clear from \eqref{eq_depolarizing} that the noise channel experienced by the stored matter qubit depends on the storage time, which is mainly affected by the RIS location, $\textit{\textbf{l}}_r$, that must be optimized.

The flying qubit experiences various types of noise when travelling through FSO quantum channels characterized with atmospheric losses, turbulence, and pointing errors, as discussed in Section \ref{section_system_model}. Atmospheric loss and pointing errors mainly cause the loss of transmitted photons, thus mainly affecting EGR. Turbulence results in both significant losses and noise. Hence, turbulence is the major source of noise on flying qubits, and results in significant phase errors, particularly when the quantum information is embedded in photon polarization \cite{vallone2014free}. Due to the nature of such turbulence-induced phase fluctuations, the resulting noise channel experienced by the flying qubit over an FSO quantum channel characterized with turbulence and other environmental losses is more suitably modeled as a \emph{phase damping quantum noise channel}, not as a general depolarizing noise channel. Accordingly, we next propose a model for the turbulence-induced phase noise error in flying qubits over FSO quantum channels, which has a direct effect on the quality of the E2E entangled states.
\begin{remark}
    The noise experienced by an entangled single photon transferred to a user $i\in\mathcal{N}$ over an FSO quantum channel characterized by losses and turbulence environmental effects, with Rytov variance $\sigma_{R,i}^2(\textit{\textbf{l}}_r)$, can be modeled by the following phase-damping quantum noise channel for a single-qubit state represented by the density matrix, $\boldsymbol{\rho}$:
    \begin{equation}\label{eq_phase_damping_noise_channel}
    \boldsymbol{\Lambda}_{2,i}(\textit{\textbf{l}}_r) = (1-p_{2,i}(\textit{\textbf{l}}_r))\boldsymbol{\rho} + p_{2,i}(\textit{\textbf{l}}_r)\boldsymbol{\sigma}_Z\boldsymbol{\rho}\boldsymbol{\sigma}_Z,
    \end{equation}
    where $\boldsymbol{\sigma}_Z$ is the Pauli-Z operator, and the probability of the phase damping quantum noise channel $p_{2,i}(\textit{\textbf{l}}_r)$ is modeled as:
    \begin{equation}\small\label{eq_phase_damping_channel_probability_parameter}
        p_{2,i}(\textit{\textbf{l}}_r) = \text{erf}(\sigma_{R,i}^2(\textit{\textbf{l}}_r)) = \frac{2}{\sqrt{\pi}} \int_{0}^{\sigma^2_{R,i}(\textit{\textbf{l}}_r)} e^{-t^2} dt,\quad\forall i \in \mathcal{N},
    \end{equation}
    where $\sigma_{R,i}^2(\textit{\textbf{l}}_r) = 1.23C_n^2k^{7/6}d_{r,i}^{11/6}(\textit{\textbf{l}}_r)$ is the Rytov variance, and $erf(.)$ is the error function of the Gaussian distribution.
\end{remark}
The noise model in \eqref{eq_phase_damping_noise_channel} and \eqref{eq_phase_damping_channel_probability_parameter} results from the fact that the Rytov variance captures phase errors due to turbulence, and it represents, without loss of generality, the variance of the resulting phase noise. This simplification allows us to focus on turbulence-induced phase fluctuations and their impact on the quality of E2E entangled links. Additionally, the turbulence-induced phase errors typically follow a Gaussian distribution in the FSOQC literature \cite{ghalaii2022quantum}. In this regard, the proposed model for the phase damping probability $p_{2,i}(\textit{\textbf{l}}_r),\forall i\in\mathcal{N}$ as a function of the Rytov variance $\sigma_{R,i}^2(\textit{\textbf{l}}_r)$ aims at connecting turbulence parameters with quantum noise channels. Particularly, this model hinges on the principle that stronger turbulence leads to a larger Rytov variance, thereby yielding a higher value of the error function. This, in turn, signifies an increased likelihood of phase damping effects on the quantum state being transmitted. Consequently, the model establishes a clear, quantifiable relationship between the intensity of atmospheric turbulence and the probability of encountering phase damping quantum noise. Although a large Rytov variance correlates with increased likelihood of significant phase disruptions, the relationship between $p_{2,i}(\textit{\textbf{l}}_r)$ and $\sigma^2_{R,i}(\textit{\textbf{l}}_r)$ is not linear but is tempered by the statistical distribution of phase errors, particularly due to the Gaussian nature of phase error distribution. Our proposed model for noise encountered during the establishment of E2E entangled connections is aligned with our loss model, and both are directly related to the environmental effects present in FSO-based QNs.

In the considered FSO-based QN, the QBS is assumed to generate pairs of entangled qubits, each of which is represented by a Bell-diagonal state, $\boldsymbol{\rho}_{BD,i} = \sum_{{j,k}\in{0,1}}\lambda_{jk,i}\boldsymbol{\Phi}_{jk}$, where $i$ corresponds to user $i\in\mathcal{N}$. The aforementioned quantum noise models capture the noise experienced while distributing an E2E entangled pair of qubits from the QBS towards its intended user. Accordingly, the resulting E2E entangled state, $\boldsymbol{\rho}''_i(\textit{\textbf{l}}_r)$, between the QBS and user $i$ is derived in the following proposition.

\vspace{-0.1cm}
\begin{proposition}\label{e2e_state_proposition}
The density matrix characterizing a successfully distributed E2E entangled state in the proposed framework is given by:
\begin{equation}\label{eq_final_state_lemma}
\begin{split}
    \boldsymbol{\rho}_{i}''(\textit{\textbf{l}}_r) &= \lambda_{00,i}''(\textit{\textbf{l}}_r)\boldsymbol{\Phi}_{00} + \lambda_{01,i}''(\textit{\textbf{l}}_r)\boldsymbol{\Phi}_{01}\\ &+ \lambda_{10,i}''(\textit{\textbf{l}}_r)\boldsymbol{\Phi}_{10} + \lambda_{11,i}''(\textit{\textbf{l}}_r)\boldsymbol{\Phi}_{11},
\end{split}
\end{equation}
with
\begin{equation}\footnotesize\label{equation_E2E_fidelity}
    \begin{split}
         \lambda_{00,i}''(\textit{\textbf{l}}_r) &= \Big[\left(1-p_{2,i}(\textit{\textbf{l}}_r)\right)F_{00}(\textit{\textbf{l}}_r) + p_{2,i}(\textit{\textbf{l}}_r)F_{01}(\textit{\textbf{l}}_r)\Big]\\
         \lambda_{01,i}''(\textit{\textbf{l}}_r) &= \Big[\left(1-p_{2,i}(\textit{\textbf{l}}_r)\right)F_{01}(\textit{\textbf{l}}_r) + p_{2,i}(\textit{\textbf{l}}_r)F_{00}(\textit{\textbf{l}}_r)\Big]\\
         \lambda_{10,i}''(\textit{\textbf{l}}_r) &= \Big[\left(1-p_{2,i}(\textit{\textbf{l}}_r)\right)F_{10}(\textit{\textbf{l}}_r) + p_{2,i}(\textit{\textbf{l}}_r)F_{11}(\textit{\textbf{l}}_r)\Big]\\
         \lambda_{11,i}''(\textit{\textbf{l}}_r) &= \Big[\left(1-p_{2,i}(\textit{\textbf{l}}_r)\right)F_{11}(\textit{\textbf{l}}_r) + p_{2,i}(\textit{\textbf{l}}_r)F_{10}(\textit{\textbf{l}}_r)\Big],
    \end{split}
\end{equation}
where $p_{2,i}(\textit{\textbf{l}}_r)= \text{erf}(\sigma_R^2(\textit{\textbf{l}}_r))$, $F_{jk}(\textit{\textbf{l}}_r) = \left(\frac{1}{4} + \left(\lambda_{jk,i}-\frac{1}{4}\right)e^{-\frac{t_i(\mathbf{l}_r)}{T}} \right)$, and $F_{00}(\textit{\textbf{l}}_r) + F_{01}(\textit{\textbf{l}}_r) + F_{10}(\textit{\textbf{l}}_r) + F_{11}(\textit{\textbf{l}}_r)= 1$. 
\end{proposition}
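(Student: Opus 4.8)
The plan is to compose the two quantum noise channels derived earlier, applying them in sequence to the initial Bell-diagonal state $\boldsymbol{\rho}_{\mathrm{BD},i}$, and then read off the coefficients in the Bell basis. The key structural fact I would exploit is that both $\boldsymbol{\Lambda}_{1,i}$ (depolarizing, acting on the matter qubit) and $\boldsymbol{\Lambda}_{2,i}$ (phase damping, acting on the flying qubit) map Bell-diagonal states to Bell-diagonal states, so the whole computation reduces to tracking how the four eigenvalues $(\lambda_{00,i},\lambda_{01,i},\lambda_{10,i},\lambda_{11,i})$ transform. First I would handle the depolarizing channel on the matter qubit: since $\boldsymbol{\Lambda}_{1,i}(\boldsymbol{\rho}) = (1-4p_{1,i})\boldsymbol{\rho} + 4p_{1,i}\tfrac{\boldsymbol{I}}{2}$ and the maximally mixed two-qubit state is $\tfrac{1}{4}\sum_{jk}\boldsymbol{\Phi}_{jk}$, each eigenvalue is pulled toward $1/4$, giving the intermediate coefficients $F_{jk}(\textit{\textbf{l}}_r) = \tfrac14 + (\lambda_{jk,i}-\tfrac14)e^{-t_i(\textit{\textbf{l}}_r)/T}$ after substituting $4p_{1,i} = 1-e^{-t_i/T}$. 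I would verify the normalization $\sum_{jk}F_{jk}=1$ as a sanity check, which follows immediately from $\sum_{jk}\lambda_{jk,i}=1$.

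Next I would apply the phase-damping channel $\boldsymbol{\Lambda}_{2,i}(\boldsymbol{\rho}) = (1-p_{2,i})\boldsymbol{\rho} + p_{2,i}\boldsymbol{\sigma}_Z\boldsymbol{\rho}\,\boldsymbol{\sigma}_Z$ to the flying qubit of the (now) Bell-diagonal state with weights $F_{jk}$. The core computation here is determining how $\boldsymbol{\sigma}_Z\otimes\boldsymbol{I}$ (or $\boldsymbol{I}\otimes\boldsymbol{\sigma}_Z$, depending on which qubit is the flying one — I would fix the convention consistent with $\ket{\Phi_{mn}}=\boldsymbol{\sigma}_X^m\boldsymbol{\sigma}_Z^n\otimes\boldsymbol{I}\ket{\Phi^+}$) acts on each Bell state $\ket{\Phi_{mn}}$. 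Using the identity $(\boldsymbol{A}\otimes\boldsymbol{I})\ket{\Phi^+} = (\boldsymbol{I}\otimes\boldsymbol{A}^{\mathsf{T}})\ket{\Phi^+}$ together with the (anti)commutation relations $\boldsymbol{\sigma}_Z\boldsymbol{\sigma}_X = -\boldsymbol{\sigma}_X\boldsymbol{\sigma}_Z$ and $\boldsymbol{\sigma}_Z^2=\boldsymbol{I}$, I expect to find that conjugation by $\boldsymbol{\sigma}_Z$ leaves $\boldsymbol{\Phi}_{00}$ and $\boldsymbol{\Phi}_{01}$ swapped (the $\boldsymbol{\sigma}_X$-free pair toggles its $Z$-flag) and likewise swaps $\boldsymbol{\Phi}_{10}\leftrightarrow\boldsymbol{\Phi}_{11}$ — this is precisely the pairing that appears in \eqref{equation_E2E_fidelity}. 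Combining the "identity" part with weight $1-p_{2,i}$ and the "$\boldsymbol{\sigma}_Z$-conjugated" part with weight $p_{2,i}$ then yields the stated mixing $\lambda_{00,i}'' = (1-p_{2,i})F_{00} + p_{2,i}F_{01}$, and the analogous expressions for the other three coefficients.

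The main obstacle I anticipate is bookkeeping the Bell-basis action of $\boldsymbol{\sigma}_Z$ correctly — in particular getting the index-swap pattern right (which of the four Bell states map to which) and making sure the convention for which tensor factor carries the flying qubit is applied consistently, since an error there would permute the roles of $\lambda_{01}''$ and $\lambda_{10}''$. A secondary subtlety is that strictly one should argue the two channels commute or at least that the order of composition (memory decoherence on the matter qubit, then phase damping on the flying qubit) is the physically correct one and yields \eqref{eq_final_state_lemma}; since the two channels act on different tensor factors they commute, so the order is immaterial and this is a one-line remark. Finally I would close by confirming $\sum_{jk}\lambda_{jk,i}''=1$ (immediate from $\sum F_{jk}=1$ and that the phase-damping mixing is doubly stochastic on the pairs), and noting that the fidelity with respect to $\boldsymbol{\Phi}_{00}$ is then read off as $\lambda_{00,i}''(\textit{\textbf{l}}_r)$, which is what the subsequent optimization section needs.
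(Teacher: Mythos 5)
Your proposal is correct and follows essentially the same route as the paper's proof: first apply the depolarizing channel to the matter qubit to obtain the intermediate Bell-diagonal weights $F_{jk}=\tfrac14+(\lambda_{jk,i}-\tfrac14)e^{-t_i/T}$, then apply the phase-damping channel to the flying qubit, whose $\boldsymbol{\sigma}_Z$-conjugation swaps $\boldsymbol{\Phi}_{00}\leftrightarrow\boldsymbol{\Phi}_{01}$ and $\boldsymbol{\Phi}_{10}\leftrightarrow\boldsymbol{\Phi}_{11}$, yielding the stated $\lambda_{jk,i}''$. The only difference is cosmetic: you invoke the transpose identity and Pauli (anti)commutation to get the Bell-index toggle, whereas the paper verifies the same swap by writing out the Bell projectors explicitly.
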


\begin{proof}
See Appendix \ref{appendix_lemma_final_state_derivation}.
\end{proof}

Proposition \ref{e2e_state_proposition} allows us to analyze the quality of the established E2E entanglement between the QBS and end nodes in the QN. In particular, the fidelity of the E2E state is characterized by the probability of obtaining the $\boldsymbol{\Phi}_{00}$ state when measuring the E2E state, and thus is given by the parameter: $\bra{\Phi_{00}}\boldsymbol{\rho}_{\mathrm{BD},i}\ket{\Phi_{00}} = \lambda_{00,i}''(\textit{\textbf{l}}_r)$. In this regard, we note the importance of the RIS placement location, $\textit{\textbf{l}}_r$, on the quality of E2E entangled states, which must satisfy a minimum QoS requirement in order for such states to be useful in quantum applications.



\vspace{-0.3cm}
\section{Joint RIS Placement and EGR Allocation Optimization Problem}\label{sec_optimization}
The analysis presented in Section \ref{sec_III}, comprehensively characterizes the distributed E2E entangled states, their quality, and generation rates to each user in the QN. Furthermore, Section \ref{sec_III} clearly demonstrated the important role of RIS placement on overall QN performance. Accordingly, in this section, we formulate a novel optimization problem that integrates optimal RIS placement with EGR resource allocation in an FSO-based QN in the presence of environmental effects. 

To do so, the initial EGR $R_{\mathrm{in},i}$ assigned by the QBS to each user $i\in\mathcal{N}$ must be optimized such that the overall EGR does not exceed the quantum memory capacity limits of the QBS. Moreover, the optimal RIS placement is characterized by identifying the optimal 3D RIS location $\mathbf{l}_r = (x_r, y_r, H_r)$ within the permitted deployment region $\Omega$, described in \eqref{eq_feasible_RIS_region}. Accordingly, the optimization variables are 1) $\boldsymbol{l}_r$: the location vector of the RIS, and 2) $\boldsymbol{R}_{\text{in}}$: the vector incorporating the initial EGRs $R_{\mathrm{in},i}$ for all users $i\in\mathcal{N}$.

The achieved E2E EGR for each user must satisfy a minimum E2E rate constraint. For guaranteeing fairness among the different users, the entangled qubits remaining available, after satisfying the users' minimum rate requirements, must be fairly distributed among all QN user. To quantify this fairness, we use the \emph{Jain's fairness index (JFI)} \cite{eger2007fair}:
\begin{equation}
    U_{\mathrm{JFI}}(\boldsymbol{l}_r,\boldsymbol{R}_{\text{in}}) = \frac{\left( \sum_{i\in\mathcal{N}} R_{\mathrm{E2E},i}(\textit{\textbf{l}}_r,R_{\mathrm{in},i}) \right)^2}{N \cdot \sum_{i\in\mathcal{N}} R^2_{\mathrm{E2E},i}(\textit{\textbf{l}}_r,R_{\mathrm{in},i})},
\end{equation}
where $R_{\mathrm{E2E},i}(\textit{\textbf{l}}_r,R_{\mathrm{in},i})$ corresponds to the E2E EGR associated with user $i\in\mathcal{N}$. However, JFI does not account for \emph{weighted fairness}, where some users in the QN have certain priority over others, e.g., their corresponding quantum applications require a higher E2E EGR compared to other QN users. Thus, such users are given higher corresponding weights. To capture these cases, we adopt the extended version of the JFI, \emph{weighted fairness index (WFI)} \cite{eger2007fair}, as our utility function. This index takes a maximum value of 1 when all users are fairly served with E2E entangled qubits based on their priorities. The corresponding objective function based on WFI is: 
\begin{equation}\label{eq_objective_func_U}
U_{\mathrm{WFI}}\left(\boldsymbol{l}_r,\boldsymbol{R}_{\text{in}}\right) = \frac{\left( \sum_{i\in\mathcal{N}} R_{\mathrm{E2E},i}\left(\textit{\textbf{l}}_r,R_{\mathrm{in},i}\right) \right)^2}{\sum_{i\in\mathcal{N}} \left( \frac{R^2_{\mathrm{E2E},i}(\boldsymbol{l}_r,R_{\mathrm{in},i})}{w_i} \right)},
\end{equation}
where $w_i$ represents the weight associated with user $i\in\mathcal{N}$, a parameter accounting for the relative importance of the quantum application performed by user $i$ in the QN. WFI is able to achieve very high fairness levels between QN users, while accounting for the importance of the user quantum application.  

Finally, different QN users may have different QoS requirements on the minimum required fidelity of E2E entangled states, along with minimum requirements on E2E EGRs. As such, the proposed optimization problem can be formulated as follows:

\vspace{-0.4cm}
\begin{subequations}
\begin{alignat}{2}
\mathcal{P}1: \quad &\!\max_{\boldsymbol{R_{\text{in}}},\boldsymbol{l}_r}        &\quad& \sum_{i\in\mathcal{N}}w_iR_{\text{E2E},i}  \label{eq:optProb}\\
&   s.t.               &      & \sum_{i\in\mathcal{N}}R_{\text{in},i} \leq C_{\text{max}} \quad,\forall i\in \mathcal{N},\label{eq:constraint1}\\
&                  &      & R_{\text{E2E},i} \geq R_{\text{min},i} \quad,\forall i\in \mathcal{N},\label{eq:constraint2}\\
&                  &      & U_{\mathrm{WFI}}(\boldsymbol{l}_r,\boldsymbol{R}_{\text{in}}) \geq \delta_{\mathrm{th}} \quad,\forall i\in \mathcal{N},\label{eq:constraint3}\\
&                  &      & \lambda_{00,i}''(\textit{\textbf{l}}_r) \geq f_{\text{min},i} \quad,\forall i\in \mathcal{N},\label{eq:constraint4}\\
&                  &      & x_{\text{min}}\leq x_r \leq x_{\text{max}},\label{eq:constraint5}\\
&                  &      & y_{\text{min}}\leq y_r \leq y_{\text{max}},\label{eq:constraint6}\\
&                  &      & H_{\text{min}}\leq H_r \leq H_{\text{max}},\label{eq:constraint7}
\end{alignat}
\end{subequations}
where constraint \eqref{eq:constraint1} ensures that the overall initial EGR does not exceed the maximum rate capacity of the QBS's quantum memory. Constraint \eqref{eq:constraint2} ensures that the E2E EGR of user $i\in\mathcal{N}$ lies above a minimum required value $R_{\text{min},i}$. Moreover, constraint \eqref{eq:constraint3} guarantees fairness in the EGR distribution among the different users by making sure the WFI is above a minimum required threshold, $\delta_{\mathrm{th}}$. Constraint \eqref{eq:constraint4} guarantees that the fidelity, $\lambda_{00,i}''(\textit{\textbf{l}}_r)$, of E2E entangled links associated with user $i\in\mathcal{N}$ lies above their minimum fidelity threshold, $f_{\text{min},i}$. Finally, constraints \eqref{eq:constraint3},\eqref{eq:constraint4}, and \eqref{eq:constraint5} ensure that the RIS placement location falls within the corresponding physical boundaries along the x-,y-, and z-axes, respectively. 

The proposed optimization problem $\mathcal{P}1$ is a non-convex programming problem, that is, in general, \emph{NP}-hard to solve. In order to solve $\mathcal{P}1$, and since the derivatives of the functions in (\ref{eq:optProb}) and (\ref{eq:constraint2}) are not easily computed, it is typical to consider a derivative-free metaheuristic solution. Thus, we develop a metaheuristic solution based on simulated annealing, as described in Algorithm \ref{algorithm_simulated_annealing}.


\vspace{0.3cm}
\begin{algorithm}[t!]\small
\caption{Simulated Annealing Algorithm for $\mathcal{P}1$}\label{algorithm_simulated_annealing}
\begin{algorithmic}[1]
\State Initialize the current solution $\boldsymbol{R}_{\text{in}}$, and $\textit{\textbf{l}}_r$ to a random feasible solution in the search space
\State Define $L$ as the number of iterations per temperature level
\State Set initial temperature $T_{\text{SA}}=T_0$
\State Initialize the optimal solution $\boldsymbol{R}_{\text{in}}^*$, and $\textit{\textbf{l}}_r^*$ to the current solution
\While{$T_{\text{SA}}>T_{\min}$}
\For{$i=1$ to $L$}
\State Generate random neighbors $\boldsymbol{R}_{\text{in}}'$, and $\textit{\textbf{l}}_r'$ of $\boldsymbol{R}_{\text{in}}$, and $\textit{\textbf{l}}_r$, respectively
\If{$x_r' \in [x_{\text{min}}, x_{\text{max}}]$, $y_r' \in [y_{\text{min}}, y_{\text{max}}]$, $H_r' \in [H_{\text{min}}, H_{\text{max}}]$, $\sum_{i\in\mathcal{N}}R_{\text{in},i} \leq C_{\text{max}}$, $R_{\text{E2E},i} \geq R_{\text{min},i}$, $U_{\mathrm{WFI}}(\boldsymbol{l}_r,\boldsymbol{R}_{\text{in}}) \geq \delta_{\mathrm{th}}$, and $\lambda_{00,i}''(\textit{\textbf{l}}_r) \geq f_{\text{min},i} \quad,\forall i\in \mathcal{N}$}
\State Calculate $\Delta U = 
U_{\mathrm{WFI}}(\textit{\textbf{l}}_r',\boldsymbol{R}_{\text{in}}') -U_{\mathrm{WFI}}(\textit{\textbf{l}}_r,\boldsymbol{R}_{\text{in}})$
\If{$\Delta U>0$}
\State Accept $\boldsymbol{R}_{\text{in}}'$, and $\textit{\textbf{l}}_r'$ as the new solution
\Else
\State Generate a random number $r\in [0,1]$
\If{$e^{\Delta U/T_{\text{SA}}}>r$}
\State Accept $\boldsymbol{R}_{\text{in}}'$, and $\textit{\textbf{l}}_r'$ as the new solution
\Else
\State Reject $\boldsymbol{R}_{\text{in}}'$, and $\textit{\textbf{l}}_r'$, and keep $\boldsymbol{R}_{\text{in}}$, and $\textit{\textbf{l}}_r$ as the current solution
\EndIf
\EndIf
\State Update the optimal solution $\boldsymbol{R}_{\text{in}}^*$, and $\textit{\textbf{l}}_r^*$ if $U_{\mathrm{WFI}}(\textit{\textbf{l}}_r,\boldsymbol{R}_{\text{in}}) > U_{\mathrm{WFI}}(\textit{\textbf{l}}_r^*,\boldsymbol{R}_{\text{in}}^*)$
\EndIf
\EndFor
\State Update temperature $T_{\text{SA}} = T_{\text{SA}} \cdot \alpha_{\text{SA}}$ (exponential cooling)
\EndWhile
\State \textbf{return} the best solution found, $\boldsymbol{R}_{\text{in}}^*$, and $\textit{\textbf{l}}_r^*$
\end{algorithmic}
\end{algorithm}\vspace{-0.15cm}

In Algorithm \ref{algorithm_simulated_annealing}, we initialize the optimization variables $\boldsymbol{R}_{\text{in}}$, and $\textit{\textbf{l}}_r$ to random feasible solutions in the search space. We set the initial temperature $T_{\text{SA}}=T_0$, and initialize the best solutions $\boldsymbol{R}_{\text{in}}^*$, and $\textit{\textbf{l}}_r^*$ to the current solutions.

In each iteration of Algorithm \ref{algorithm_simulated_annealing}, we generate random neighbors $\boldsymbol{R}_{\text{in}}'$, and $\textit{\textbf{l}}_r'$ of the current solutions calculate the change in the objective function, $\Delta U$. Solutions with $\Delta U>0$ are accepted, enhancing the search. For $\Delta U \leq 0$, acceptance is probabilistic, adhering to the Metropolis criterion $\exp(\Delta U/T_{\text{SA}})$, and compared with a random number $r \in [0,1]$. This method helps to avoid local minima and ensures rigorous exploration of the search space.

Moreover, if the new accepted solution achieves a larger WFI value compared to the current best solution, the algorithm updates $\boldsymbol{R}_{\text{in}}^*$ and $\textit{\textbf{l}}_r^*$ with these values. In each iteration, the temperature undergoes a reduction based on an exponential decay model until it reaches the minimum threshold, $T_{\min}$. Upon completion, the algorithm outputs the best-found solutions, $\boldsymbol{R}_{\text{in}}^*$ and $\textit{\textbf{l}}_r^*$, which are approximate solutions that are generally near optimal \cite{kirkpatrick1983optimization}.

\section{Simulation Results and Analysis}
Next, we perform extensive simulations to evaluate the performance of the proposed RIS-assisted FSO-based QN. For our simulations, we define the following \emph{default QN setup}: A QBS is located at $\textit{\textbf{l}}_s = (0,0,90)$\,m in a 3D grid, serving a number of QN users whose locations vary along the x- and y-axes, with a fixed height of $10$\,m. User locations are non-uniformly distributed, following a truncated normal distribution, within a $400$\,m$\times400$\,m squared region. Each user $i$ is located at $\textit{\textbf{l}}_i = (x_i, y_i, 10)$\,m, where $x_i \sim \mathcal{TN}(250, 50, 50, 450)$\,m with a mean of 250 and a standard deviation of 50, and $y_i \sim \mathcal{TN}(200, 50, 0, 400)$\,m, with a mean of 200 and a standard deviation of 50, $\forall i\in\mathcal{N}$. Statistical results are averaged over 1000 simulation runs, and the region, $\Omega$, within which the RIS can be deployed corresponds to the same squared user area, defined as $\Omega = \{(x_r, y_r, H_r) \mid 50$\,m $\leq x_r \leq 450$\,m, $0$\,m $\leq y_r \leq 400$\,m, $35$\,m $\leq H_r \leq 90$\,m\}. The RIS placement location $\boldsymbol{l}_r$ is optimized for each set of user locations, with a minimum distance of $20$\,m required between the RIS and each user to avoid being co-located. The feasible range of values for the initial EGR $R_{\text{in},i}$ for a user $i\in\mathcal{N}$ is: 1\,kHz-1\,MHz~\cite{chehimi2021entanglement_rate_optimization,panigrahy2023capacity}. Here, we adopt the $\alpha_s$-model for entanglement generation \cite{humphreys2018deterministic,rozpedek2019building}. Then, a twirling map is assumed to be applied to transform the generated entangled pairs at the QBS into identical Bell-diagonal quantum states. In particular, the initial entanglement generation attempt rate is considered to be $1$\,MHz \cite{panigrahy2023capacity}, with a probability of success for the initial entangled pair generation process $p_{\mathrm{in}} = 2\alpha_s$, where $\alpha_s$ is a tunable parameter that enables the rate-fidelity tradeoff \cite{humphreys2018deterministic}. We thus set the initial EGR, $R_{\text{in},i} = 2\alpha_s\times 10^6$. Note that parameter $\alpha_s\in(0,0.5)$ is directly related to the initial fidelity of the generated entangled pairs, $\lambda_{00,i}$, since $\lambda_{00,i} = 1 - \alpha_s$. As such, for an initial fidelity in the practical range $\lambda_{00,i} \in(0.5,0.9995)$, the range of $\alpha_s$ is adjusted to $\alpha_s\in(0.0005,0.5)$. In particular, a very high initial EGR $R_{\text{in},i} = 1$\,MHz corresponds to entangled qubits with a low initial fidelity $\lambda_{00,i} = 0.5$, while a low initial EGR $R_{\text{in},i} = 1$\,kHz results in entangled qubits with a high initial fidelity $\lambda_{00,i} = 0.9995$. This is due to the rate-fidelity tradeoff within the adopted entanglement generation scheme, whereby the fidelity of generated entangled qubits is compromised when a higher EGR is desired \cite{humphreys2018deterministic}. Unless specified otherwise, the users are assumed to be equally-weighted, have similar minimum rate requirement of $R_{\mathrm{min}} = 1$ \cite{chehimi2023scaling} and a minimum required fairness $\delta_{\text{th}} = 0.95$, and their required minimum E2E fidelity is sampled from a uniform distribution such that $\lambda_{00,i} \sim \mathcal{U}(0.5,0.7)$ \cite{panigrahy2023capacity,chehimi2023scaling}, which correspond to different quantum applications. The different environmental parameters considered in the default setup, which correspond to sunny weather, moderate turbulence, and low pointing error are summarized in Table \ref{tbl:Def}, based on \cite{chiti2022mobile,9443170,alshaer2021hybrid,10130384,9466323,herbschleb2019ultra,chehimi2023matching}, and \cite{dahlberg2019link}. Unless stated otherwise, the default setup parameters are adopted throughout the simulations.
\begin{table}[t!] 
\caption{Summary of Parameters used in Simulations.}\vspace{-0.2cm}
\label{tbl:Def}\small
\centering
\begin{tabular}{|p{1.3cm}|p{3.5cm}|p{2.5cm}|}
 \hline
 {\bf Parameter} & {\bf Description} & {\bf Value} \\ \hline
 $\lambda$ & Wavelength. & 1550\,nm\footnotemark \\ \hline 
 $\zeta_{th}$ & FSO channel threshold. & 0.05\\ \hline 
 $\eta$ & Responsivity of the receiver. & 0.95 \\ \hline 
 $\varsigma$ & RIS reflection efficiency. & 0.97 \\ \hline 
 $\hat{\kappa}$ & Weather-dependent attenuation coefficient of the FSO link. & $0.43$ dB/km  \\ \hline 
 $C_n^2$ & Refractive index structure constant which quantifies turbulence strength. & moderate turbulence: $5\times 10^{-14}$ m$^{-2/3}$, strong turbulence: $1\times 10^{-13}$ m$^{-2/3}$ \\ \hline 
 $r_a$ & Aperture radius. & 0.55\,m   \\ \hline 
 $\phi_d$ & Beam divergence angle. & $8$\,mrad \\ \hline 
 $\sigma_{\theta}$ & Standard deviation for misalignment angle $\theta$. & $1$\,mrad  \\ \hline 
 $\sigma_{\phi}$ & Standard deviation for misalignment angle $\phi$. & $0.25$\,mrad \\ \hline 
$C_\text{max}$ & Quantum memory capacity & $1\times 10^{7}$\\ \hline  
$T$ & Quantum memory coherence time & $2.43$\,ms\\ \hline 
$T_{\text{proc}}$ & Average qubit processing time & $4\,\mu$s\\ \hline 
\end{tabular} 
\end{table}
\footnotetext{Note that atmospheric absorption becomes almost negligible at this wavelength, particularly in clear air conditions, thus, it was selected for FSOQC.}


First, we study the impact of varying the locations of QN users on the optimal RIS location. In particular, we consider a QN with three users placed in two different scenarios: 1) the three users are close to each other, and separated by equal distances of $50$\,m, i.e., $\boldsymbol{l}_1 = (350,0,10)$, $\textit{\textbf{l}}_2 = (400,0,10)$, and $\boldsymbol{l}_3 = (450,0,10)$, 2) the first user is moved closer to the QBS, such that $\boldsymbol{l}_1 = (200,0,10)$, thus one user is far away from the other two. For each scenario, we consider two cases, where the weights of the different users are varied. The first case assigns equal weights to all users, $\boldsymbol{w}=[1/3, 1/3, 1/3]$, while the second case assigns weights that increase with the distance of the users from the QBS, $\boldsymbol{w} = [0.1, 0.3, 0.6]$. We observe from Fig.~\ref{fig_result_fig1_RIS_placement} that when the three users are close to one another, the optimal RIS placement location will be to the left of the first user, particularly, at an intermediate point along the x-axis between the QBS and its closest user, while satisfying the minimum y-axis 20\,m distance. Here, when the weights of the users are changed such that the farthest away user has a higher priority, the RIS shifts to the right closer towards the users. On the other hand, when one of the users is closer to the QBS, the RIS is placed in-between the users. In this case, as the weights change, the RIS shifts towards the user with the smallest weight, because moving towards the other users with high weights would yield a very low E2E rate for the first user, which would correspond to a lower sum rate due to the fairness conditions. For this scenario, we observe that our proposed simulated annealing algorithm achieves a near optimal performance within 6\% of the optimal solution obtained through exhaustive search, while reducing the required computational time by more than 65\%.

\begin{figure}
    \centering
    \includegraphics[width=\linewidth]{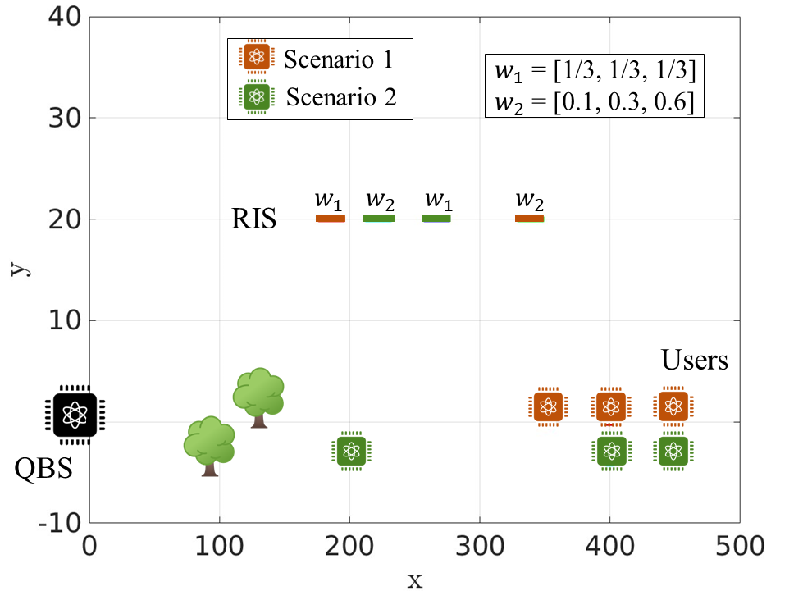}\vspace{-0.15cm}
    \caption{Optimal RIS placement for different scenarios of user distribution.}\vspace{-0.5cm}
    \label{fig_result_fig1_RIS_placement}
\end{figure}

Next, in Fig. \ref{fig_results_fig3_E2E_rate_RIS}, the achieved E2E EGRs, $R_{\text{E2E}}$, for a three-user QN in the aforementioned first scenario is considered. In this figure, we compare performance with a \emph{rate maximizing} framework where the minimum fairness threshold constraint is relaxed, and a \emph{log-rate maximizing} framework, where both the minimum fairness and rate constraints are relaxed, and the optimization objective is to maximize the weighted sum of the log-rate $\sum_{i\in\mathcal{N}}w_i\log{R_{\text{E2E},i}}$. We clearly observe from Fig. \ref{fig_results_fig3_E2E_rate_RIS} that our proposed framework increases fairness (WFI) by around 64\% compared to the rate and log-rate maximizing frameworks. Fig. \ref{fig_results_fig3_E2E_rate_RIS} also shows that the E2E sum rate achieved by the rate and log-rate maximizing frameworks is higher than the one obtained using our framework. This is because these two frameworks provide the user with the best channel condition, i.e., shortest E2E distance and weather condition, with the maximum possible E2E rate, while other more distant users are given lower rates that just satisfy their minimum rate and fidelity constraints. In particular, we observe from Fig. \ref{fig_results_fig3_E2E_rate_RIS} that user 1, which is the closest to the QBS and has the shortest E2E distance, receives a significantly higher E2E rate compared to other users in the rate and log-rate maximizing frameworks. Thus, the resulting E2E sum rate is high, while fairness is sacrificed. In contrast, our framework first ensures that all users satisfy their minimum rate constraint, and, then, allocates the remaining resources fairly among all users. Fig. \ref{fig_results_fig3_E2E_rate_RIS} also shows that the rate and log-rate maximizing frameworks have a very similar performance. Additionally, the rate maximizing framework can be considered as a special case of our framework, corresponding to the case when the minimum fairness threshold, $\delta_{\text{th}}$, is set to zero. Furthermore, the results shown in Fig. \ref{fig_results_fig3_E2E_rate_RIS} were, again, near-optimal, within less than 6\% of the optimal value obtained using exhaustive search.

\begin{figure}
    \centering
 \includegraphics[width=\linewidth]{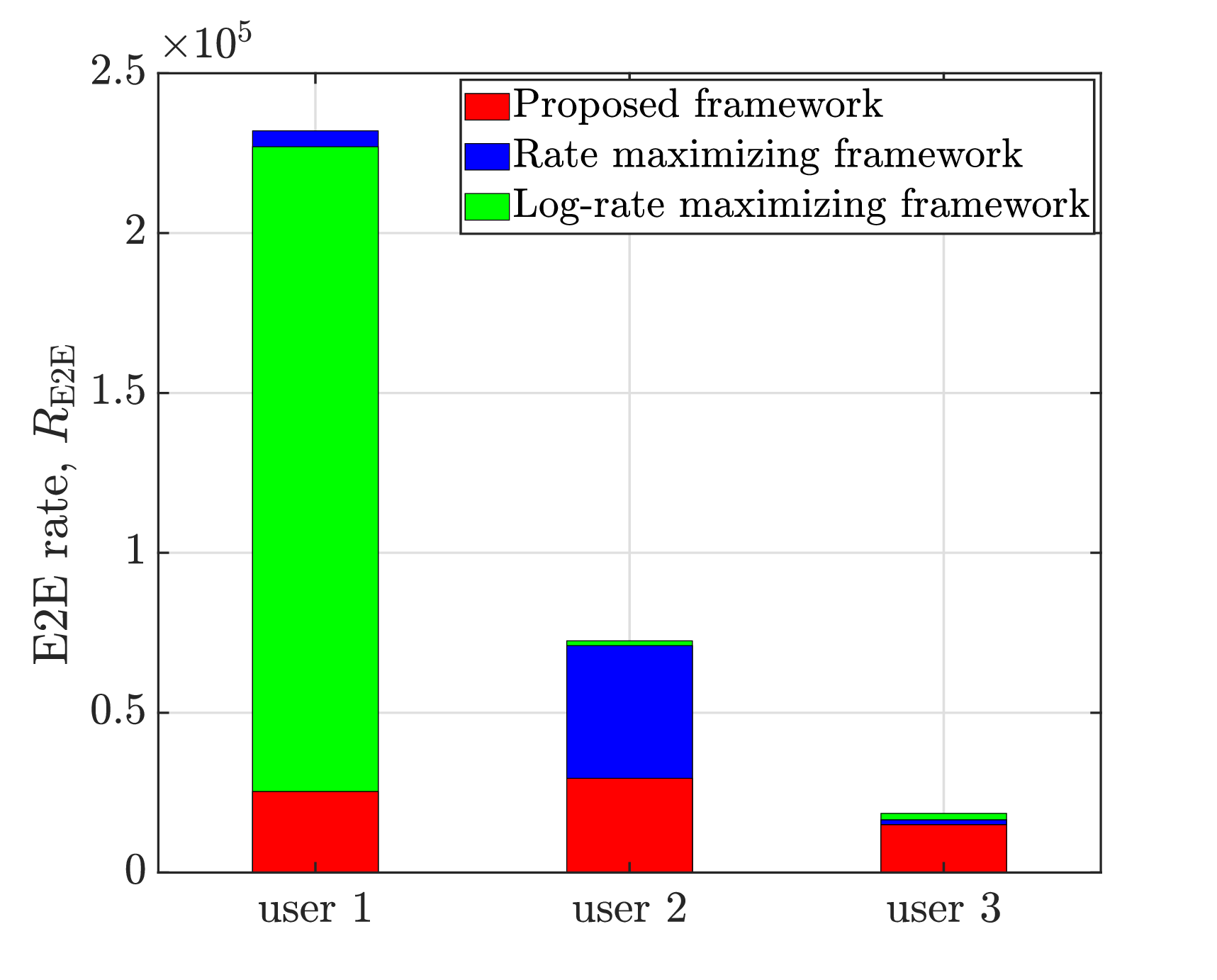}\vspace{-0.3cm}
    \caption{E2E EGR achieved at the optimal RIS location for a QN with three users located based on scenario 1 in Fig. \ref{fig_result_fig1_RIS_placement}.}\vspace{-0.6cm}
    \label{fig_results_fig3_E2E_rate_RIS}
\end{figure}

\begin{figure}[!t]\vspace{-0.3cm}
    \centering
    \includegraphics[width=0.5\textwidth]{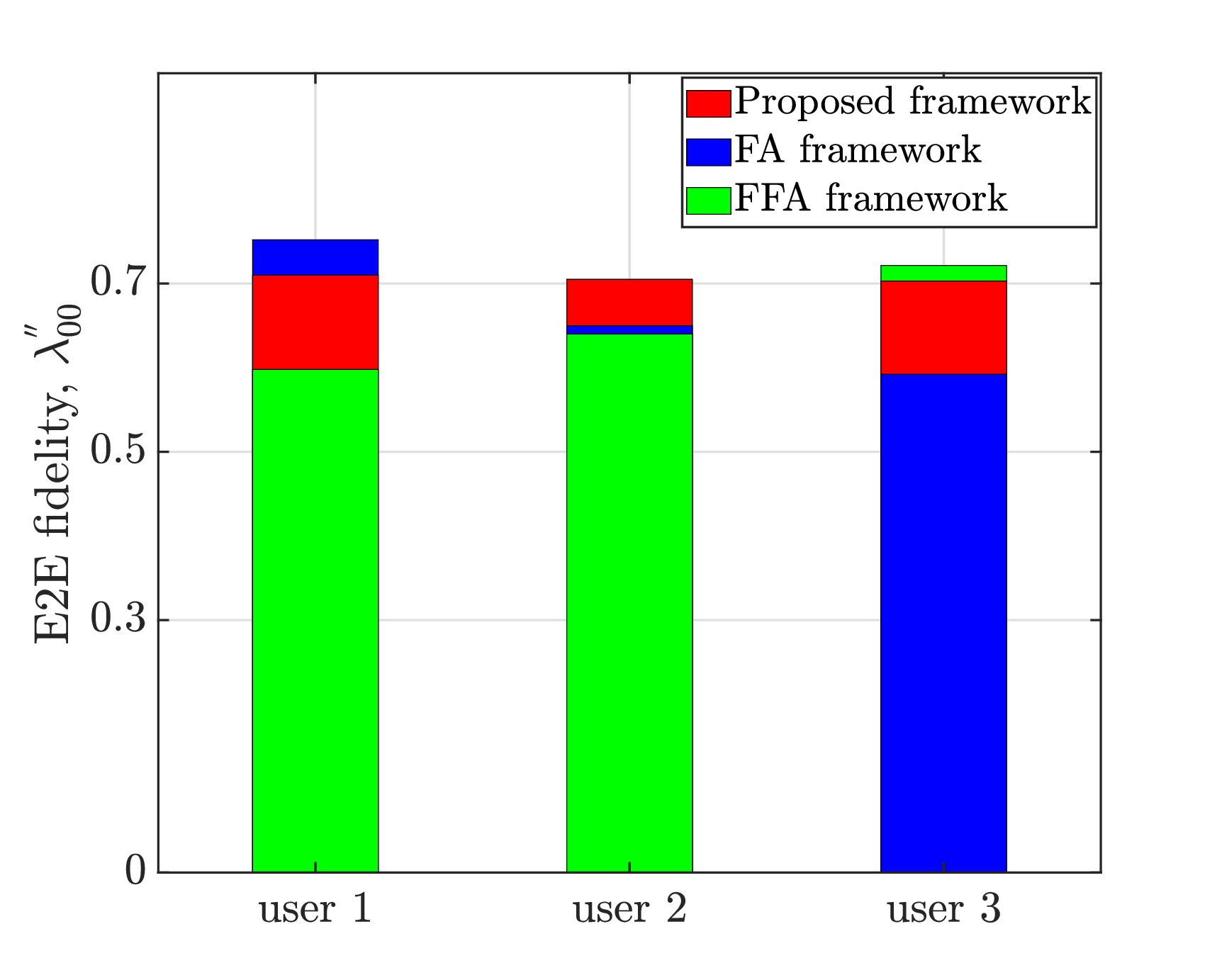}\vspace{-0.15cm}\\
    (a) Moderate turbulence.\\[1ex] \vspace{-0.15cm}
    \includegraphics[width=0.5\textwidth]{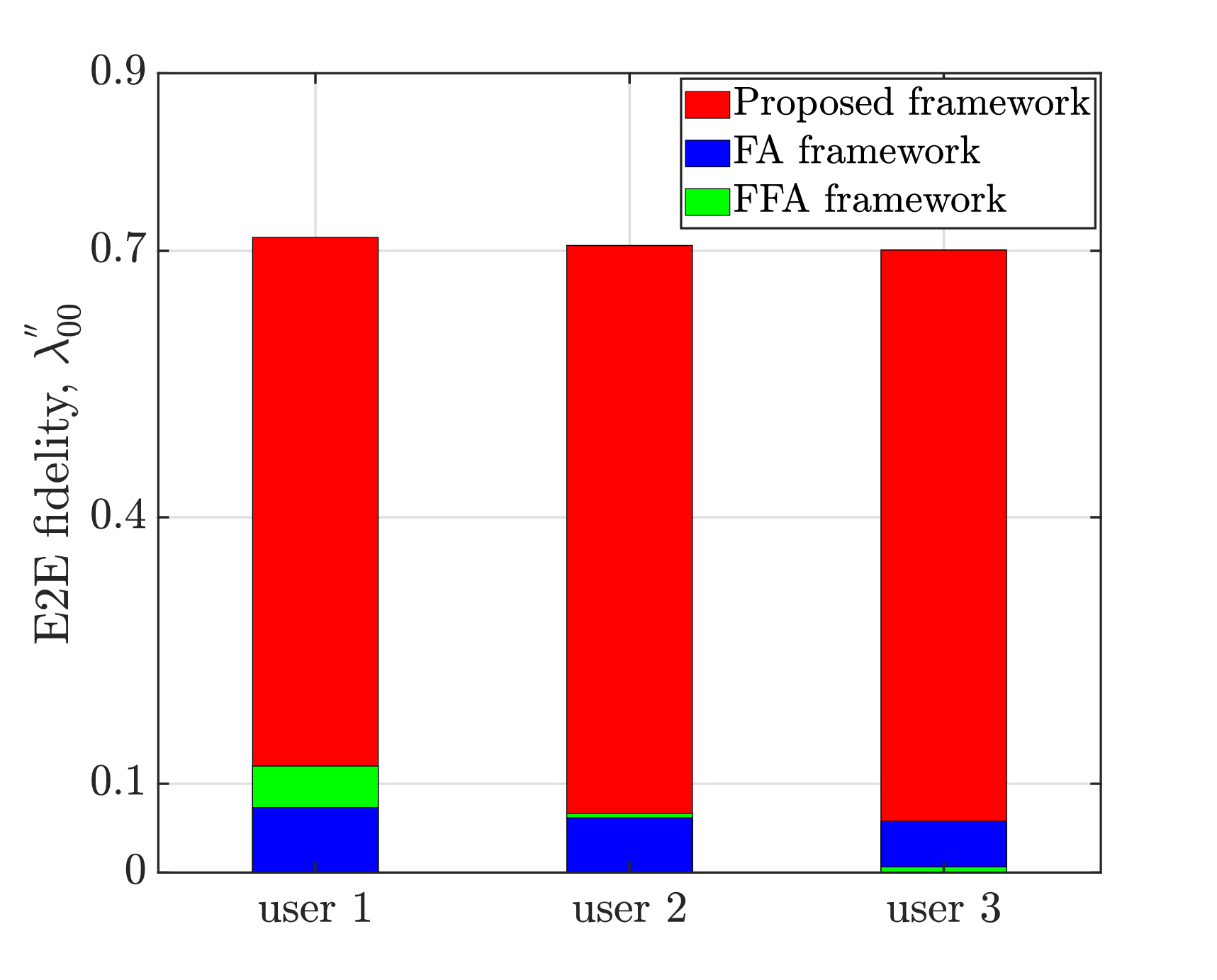}\vspace{-0.15cm}\\
    (b) Strong turbulence. \vspace{-0.15cm}
    \caption{Achieved E2E fidelity in a 3-user QN under different turbulence effects.}\vspace{-0.3cm}
    \label{fig_E2E_fidelity_three_users}
    \vspace{-0.4cm}
\end{figure}

Our framework uniquely models quantum noise effects on entangled qubits in FSO channels, ensuring distributed entangled states meet minimum fidelity thresholds for effective quantum applications. This is in a stark contrast to classical frameworks that overlook resource fidelity. To validate this, in Fig. \ref{fig_E2E_fidelity_three_users}, we show the obtained E2E fidelity, $\lambda_{00,i}''$, for the users in a three-user QN, where users perform a quantum application that requires entangled qubits with a minimum fidelity $\delta_{\text{th}} = 0.7$. We perform Monte Carlo simulations for the locations, and we consider user 3 as having the largest average distance from the QBS. To asses the performance of our proposed framework, we compare the obtained results against two benchmarks: 1) \emph{fidelity-agnostic (FA) framework}, which is the classical framework for resource allocation in communication networks, where resources are distributed while disregarding fidelity, and 2) \emph{fidelity- and fairness-agnostic (FFA) framework}, where both the minimum required fidelity and fairness constraints are relaxed in our framework. Fig. \ref{fig_E2E_fidelity_three_users} also shows the obtained E2E fidelity under both moderate and strong turbulence effects. First, under moderate turbulence effects, in Fig. \ref{fig_E2E_fidelity_three_users}(a), we observe that the proposed framework is the only one that manages to achieve the minimum required fidelity for all three users in the QN. Moreover, Fig. \ref{fig_E2E_fidelity_three_users}(a) shows that the FA framework results in the highest fidelity for user 1, which has the shortest average E2E distance, and in the lowest fidelity for user 3, which has the longest average E2E distance from the QBS. This is because the FA framework is designed to ensure fairness among users, thereby limiting the E2E rates. Consequently, this restriction affects the initial EGR assigned to user 1, who experiences the least amount of losses in the QN. Meanwhile, the highest initial EGR will be assigned to user 3 to account for its high losses. Due to the inverse relation between the initial EGR and initial entanglement fidelity, the initial fidelity of the entangled qubits assigned to user 1 will be the highest, while user 3 will be assigned entangled qubits with the lowest E2E fidelity. The exact opposite performance is obtained using the FFA framework, because this framework maximizes the E2E rate, assigned to the user with the shortest E2E distance, i.e., user 1. Accordingly, user 1 achieves the lowest E2E fidelity. Additionally, this framework ensures that users with the largest E2E distances, and correspondingly highest losses, e.g., user 3, just satisfy their minimum required E2E rate and, thus, achieve the highest E2E fidelity. 

Next, in Fig. \ref{fig_E2E_fidelity_three_users}(b), under strong turbulence conditions, which correspond to harsh environmental effects, we observe that the two considered classical frameworks, which are agnostic to fidelity in resource allocation, fail to satisfy the minimum fidelity requirements of the QN users. In contrast, our proposed framework effectively meets the heterogeneous minimum fidelity requirements of all QN users. In particular, the distributed qubits using the baseline FA framework and the FFA framework have a fidelity that is at least 83\% less than the users' required minimum fidelity under strong turbulence conditions. Therefore, in order for these frameworks to successfully distribute entangled qubits over the QN users, they must sacrifice the fidelity of those qubits, which results in useless entangled qubits that cannot perform practical quantum applications. The reason why strong turbulence, specifically, has such a significant effect on the E2E fidelity is because it results in phase fluctuations that directly lead to high quantum noise effects, an effect that our proposed framework uniquely captures, as discussed in Section \ref{sec_quantum_noise}.

\begin{figure}[!t]
    \centering
    \includegraphics[width=0.5\textwidth]{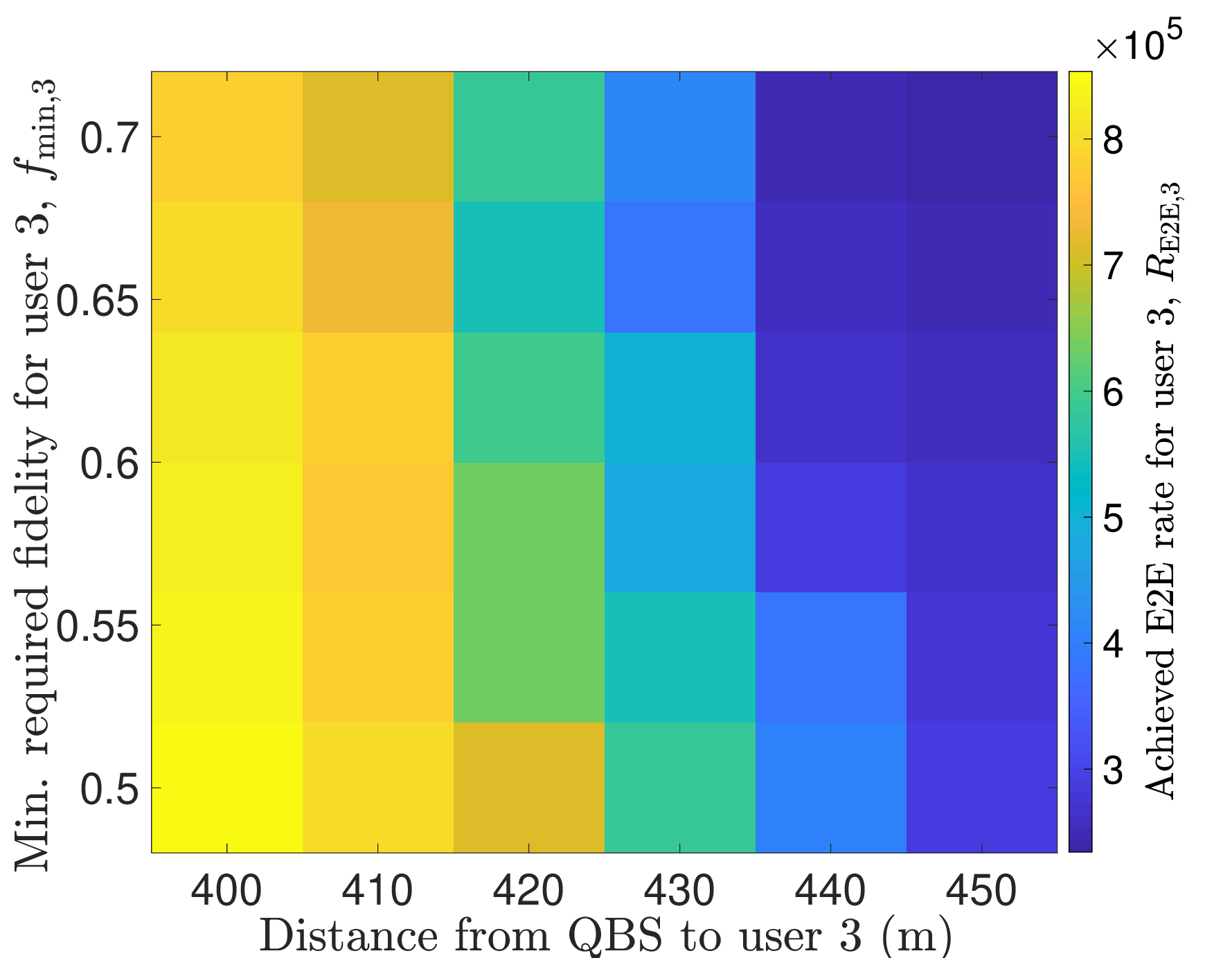}\vspace{-0.15cm}\\
    (a) Moderate turbulence.\\[1ex] \vspace{-0.15cm}
    \includegraphics[width=0.5\textwidth]{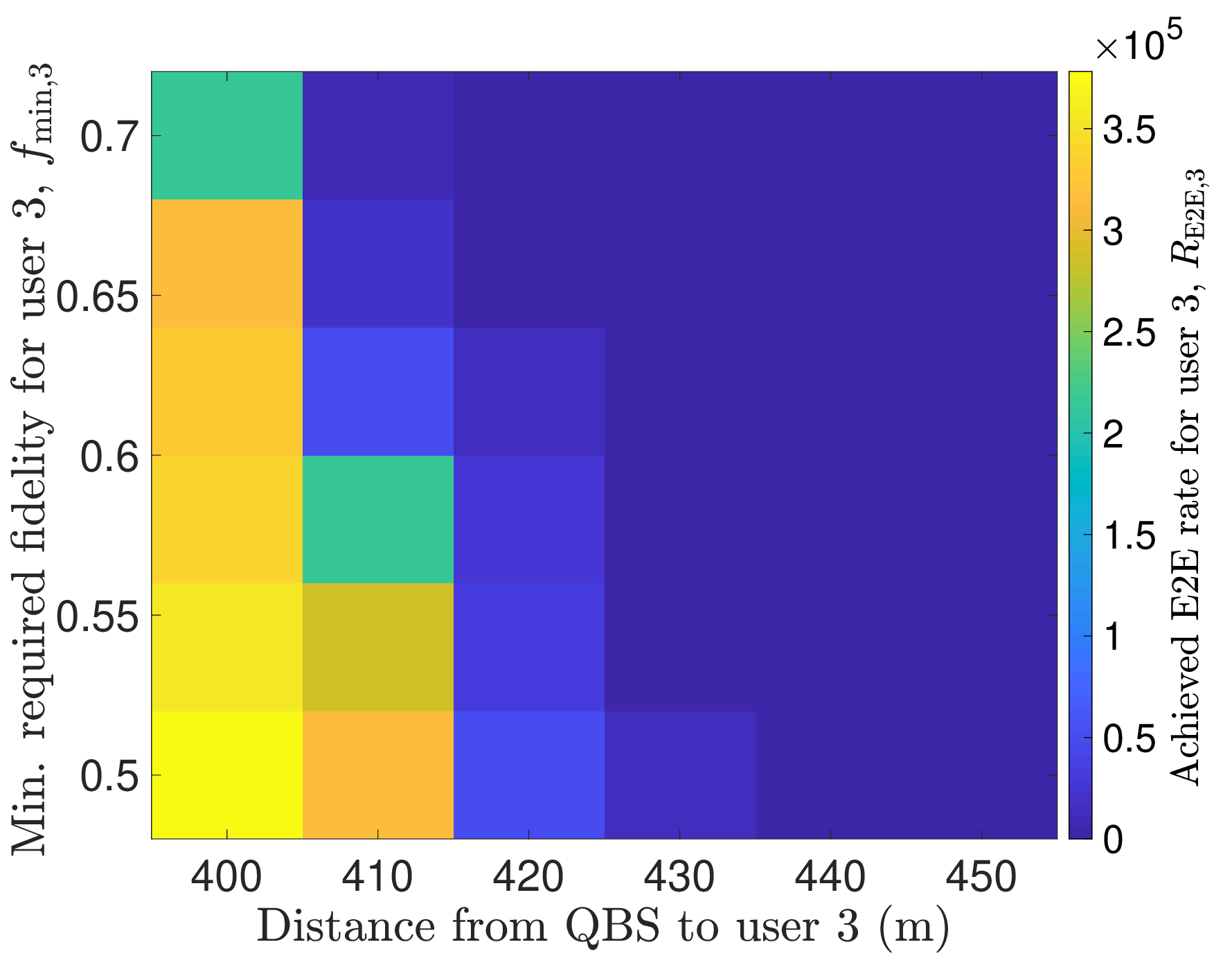}\vspace{-0.15cm}\\
    (b) Strong turbulence. \vspace{-0.15cm}
    \caption{Achieved E2E rate for the third user in a 3-user QN while varying the E2E distance and minimum required fidelity for third user under different turbulence effects.}\vspace{-0.3cm}
    \label{fig_RIS_heatmaps}
    \vspace{-0.cm}
\end{figure}

Based on the derived expression for the E2E entangled state in Proposition \ref{e2e_state_proposition}, we observe that the turbulence strength and the E2E distance are the key factors that directly affect the E2E fidelity. Hence, in Fig. \ref{fig_RIS_heatmaps}, we analyze the impact of the E2E distance, along with the minimum required fidelity on the achieved E2E sum rate in a three-user QN under moderate and strong turbulence effects. Particularly, we consider Mote Carlo simulations for the location of two users with a minimum required fidelity of 0.5. Simultaneously, the location of the third user is varied in the tail of the considered truncated normal distribution of user locations such that the distance between the QBS and user 3 is between 400\,m and 450\,m, and its required minimum fidelity, $f_{\text{min},3}$ is varied between 0.5 and 0.7, as shown in Fig. \ref{fig_RIS_heatmaps}. Under moderate turbulence, we observe from Fig. \ref{fig_RIS_heatmaps}(a) that the impact of increasing the E2E distance on the resulting E2E fidelity is more severe than the impact of increasing the minimum required fidelity, in moderate quantum applications. This becomes increasingly evident under stronger turbulence, as shown in Fig. \ref{fig_RIS_heatmaps}(b). In particular, we observe from Fig. \ref{fig_RIS_heatmaps}(b) that, when the QN has users far away from the QBS and under strong turbulence effects, the QBS could become unable to serve all users with entangled qubits that satisfy their various fairness, rate, and fidelity requirements. For instance, when user 3 has the lowest minimum required fidelity of 0.5 under strong turbulence, the largest distance of that user from the QBS that our framework can tolerate is around 430\,m. One solution to overcome this bottleneck is to use entanglement generation sources with higher fidelity, and to adopt techniques like multiplexing of single-photon sources to enhance the initial entanglement generation attempt rate, which would ensure the QBS has enough resources to overcome the increased losses and noise due to strong turbulence.   

\begin{figure}
    \centering
 \includegraphics[width=0.95\linewidth]{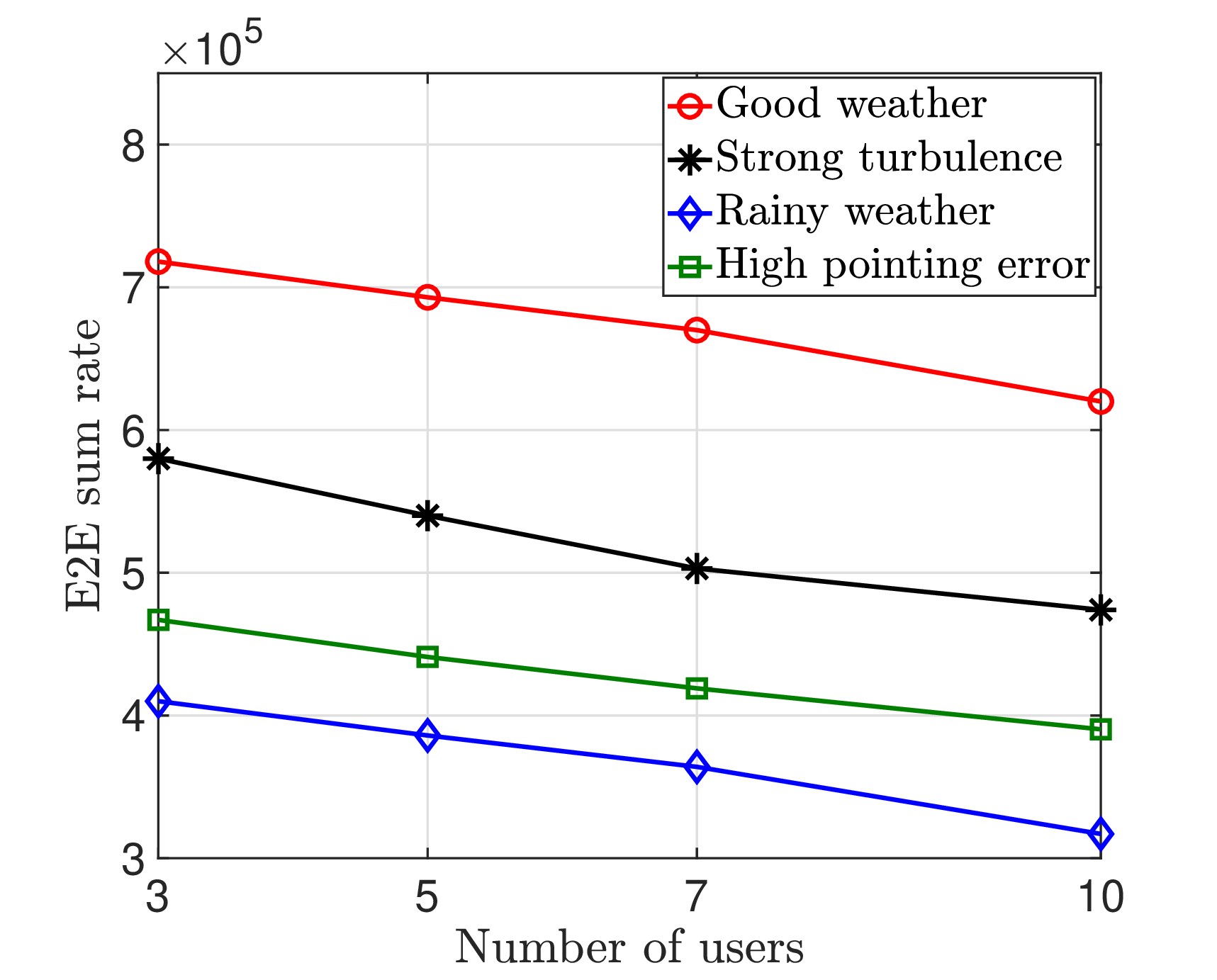}\vspace{-0.1cm}
    \caption{Achieved E2E sum rate for a varying number of users under different environmental conditions.}\vspace{-0.5cm}
    \label{fig_results_scalability}
\end{figure}

Finally, Fig. \ref{fig_results_scalability} compares the obtained E2E sum rate in the default setup (sunny weather, moderate turbulence, and low pointing error) against the performance obtained under: 1) rainy weather ($\hat{\kappa}=6.27$dB/km), 2) strong turbulence ($C_n^2 = 1\times10^{-13}$m$^{-2/3}$), and 3) high pointing error ($\sigma_\theta = 3$\,mrad and $\sigma_\phi = 1$\,mrad). From Fig. \ref{fig_results_scalability}, we observe that our proposed framework can successfully serve a high number of QN users while satisfying their different minimum fidelity, rate, and fairness constraints. Here, we note that the E2E sum rate decreases as the number of users increases. This is due to the fairness constraint, which limits the flexibility in delivering E2E rates to users with better channel conditions in order to satisfy the minimum requirements for all QN users. Furthermore, we observe from Fig. \ref{fig_results_scalability} that the weather condition (e.g., sunny vs. rainy) has the most significant impact among the different environmental effects on the overall network performance. Particularly, changing the weather from being sunny to being rainy results in above 45\% reduction in the achieved E2E sum rate, while having a high pointing error results in around 33\% reduction, and a strong turbulence results in around 18\% reduction in the achieved E2E sum rate. 


\vspace{-0.4cm}
\section{Conclusion}\label{sec_conclusion}\vspace{-0.2cm}
In this paper, we have proposed the use of an RIS in an FSO-based star-shaped QN that is replete with blockages under various environmental conditions. We have analyzed the various losses experienced by the quantum signals over the FSO channel and proposed a novel model for the resulting quantum noise and its effect on the quality of quantum signals. We have then developed a novel framework that jointly optimizes the RIS placement and the initial EGR allocation among the different QN users. This joint optimization is solved while ensuring a minimum fairness level between the users and satisfying their heterogeneous QoS requirements on minimum rate and fidelity levels. Simulation results showed that our framework outperforms existing classical resource allocation frameworks that also fail to satisfy the minimum fidelity requirements. Additionally, we have verified the scalability of our framework, showed its greater sensitivity to weather conditions over pointing errors and turbulence, and found that a user's E2E distance more strongly influences the E2E rate than minimum required fidelity does.
\vspace{-0.4cm}
\appendices
\section{Proof of Theorem 1}\label{appendix_proof_Theorem_1}
To prove Theorem \ref{thrm_prob_success}, we use the PDFs for the pointing errors and the atmospheric turbulence to find the probability of success. For this purpose, we first find the PDF for the combined attenuation, i.e., $h_i, \forall i\in\mathcal{N}$, as follows:

\vspace{-0.5cm}
\begin{align}
    f_{h_i}(h_i) = \int f_{h_i|h_{i}^\text{a}}\left(h_i|h_{i}^\text{a}\right) f_{h_{i}^\text{a}}\left(h_{i}^\text{a}\right)  dh_{i}^\text{a}, \label{CombPDF}
\end{align}
where $f_{h_i|h_{i}^\text{a}}\left(h_i|h_{i}^\text{a}\right)$ is the conditional probability given $h_{i}^\text{a}$, defined as follows:

\begin{equation}
    \begin{split}
    &f_{h_i|h_{i}^\text{a}}\left(h_i|h_{i}^\text{a}\right) = \frac{1}{h_{i}^\text{a}h_{i}^\text{p}}f_{h_{i}^\text{g}}\left(\frac{h_i}{h_{i}^\text{a}h_{i}^\text{p}}\right),\\
    & = \frac{\vartheta_i}{A_{0,i}^{\vartheta_i}h_{i}^\text{a}h_{i}^\text{p}} \left(\frac{h_i}{h_{i}^\text{a}h_{i}^\text{p}}\right)^{\vartheta_i-1},\\
    &0 \leq h_i \leq A_{0,i}h_{i}^\text{a}h_{i}^\text{p},\quad \forall i \in \mathcal{N}. \label{CondProb}
\end{split}
\end{equation}

By substituting \eqref{TurbPDF} and \eqref{CombPDF} in \eqref{CondProb}, we have:
\begin{align}
    &f_{h_i}(h_{i})={{2\vartheta_i(\alpha_i\beta_i)^{(\alpha_i+\beta_i)/2}}\over{\left(A_{0,i}h_{i}^\text{p}\right)^{\vartheta_i}\Gamma(\alpha_i)\Gamma(\beta_i)}} h_i^{\vartheta_i - 1}\times \notag \\&\int_{\frac{h_i}{A_{0,i}h_{i}^\text{p}}}^\infty (h_{i}^\text{a})^{((\alpha_i+\beta_i)/2)-1-\vartheta_i}K_{\alpha_i-\beta_i}\left(2\sqrt{\alpha_i\beta_i h_{i}^\text{a}}\right) d_{h_{i}^\text{a}}, \label{eq_1}
\end{align}
The integral in \eqref{eq_1} can be numerically computed. Particularly, to get a closed-form expression, we can express $K_v(.)$ in terms of the Meijer's G-function $G_{p,q}^{m,n}$. In particular, $K_v(.)$ can be written as
\begin{align}
    K_{\alpha - \beta}(x) = \frac{1}{2}G_{0,2}^{2,0}   \left[\frac{x^2}{4}{\Bigg\vert}{\begin{matrix} \multicolumn{2}{c}{-} \\ \frac{\alpha - \beta}{2} & \frac{\beta - \alpha}{2}\end{matrix}}\right].
\end{align}
Then, the PDF for the combined channel gain will be:
\begin{align}
    &f_{h_i}(h_i) = {\frac{\vartheta_i\alpha_i\beta_i}{A_{0,i}h_{i}^\text{p}\Gamma(\alpha_i)\Gamma(\beta_i)}} \left(\frac{\alpha_i\beta_ih_i}{A_{0,i}h_{i}^\text{p}}\right)^{((\alpha_i+\beta_i)/2) - 1} \times \notag \\ &G_{1,3}^{3,0} \left[\frac{\alpha_i\beta_i}{A_{0,i}h_{i}^\text{p}}h_i {\Bigg\vert}{\begin{matrix} \multicolumn{3}{c}{1 - \frac{\alpha_i + \beta_i}{2} + \vartheta_i} \\ -\frac{\alpha_i + \beta_i}{2} + \vartheta_i, & \frac{\alpha_i - \beta_i}{2}, &  \frac{\beta_i - \alpha_i}{2}\end{matrix}}\right], \label{simp}
\end{align}
\eqref{simp} can be simplified to
\begin{align}
    &f_{h_i}(h_i) = {\frac{\vartheta_i\alpha_i\beta_i}{A_{0,i}h_{i}^\text{p}\Gamma(\alpha_i)\Gamma(\beta_i)}} \times \notag \\ &G_{1,3}^{3,0} \left[\frac{\alpha_i\beta_i}{A_{0,i}h_{i}^\text{p}}h_i {\Bigg\vert}{\begin{matrix} \multicolumn{3}{c}{ \vartheta_i} \\  \vartheta_i - 1, & \alpha_i - 1, &  \beta_i - 1 \end{matrix}}\right].
\end{align}
After obtaining the PDF for the combined channel gain, we can derive the probability of success, which can be defined as
\begin{align}
    P_{\mathrm{Succ}}(\zeta_{\mathrm{th}}) &= P\left(h_i>\zeta_{\mathrm{th}}\right), \notag \\
    & = 1 - F_{h_i}(\zeta_{\mathrm{th}}),\notag \\
    & \overset{(a)}{=} 1 - \int_0^{\zeta_{\mathrm{th}}} f_{h_i}(h_i) dh_i,
\end{align}
where $F_h(.)$ is the CDF of random variable $h$, and $\zeta_{\mathrm{th}}$ is the predefined channel gain threshold. Note that the integral in (a) can be solved by expressing the $K_v(.)$ in terms of the Meijer’s $G$-function, which leads to the following:
    \begin{equation}\small
    \begin{split}
        &P_{\mathrm{succ},i}(\textit{\textbf{l}}_r) = 1-\Bigg(\frac{\vartheta_i(\textit{\textbf{l}}_r)}{\Gamma(\alpha_i(\textit{\textbf{l}}_r))\Gamma(\beta_i(\textit{\textbf{l}}_r))} \\
        & \times G_{2,4}^{3,1} \left[\frac{\alpha_i(\textit{\textbf{l}}_r)\beta_i(\textit{\textbf{l}}_r) \chi_{\mathrm{th}}}{A_{0,i}(\textit{\textbf{l}}_r)h_{i}^\text{p}(\textit{\textbf{l}}_r)} {\Bigg\vert} \begin{matrix} 
            \multicolumn{2}{c}{1,} & \multicolumn{2}{c}{\vartheta_i(\textit{\textbf{l}}_r) +1} \\ 
            \vartheta_i(\textit{\textbf{l}}_r), & \alpha_i(\textit{\textbf{l}}_r), & \beta_i(\textit{\textbf{l}}_r),&0
            \end{matrix}\right]\Bigg),
    \end{split}
    \end{equation}
where $\chi_{\mathrm{th}} = \frac{\zeta_{\mathrm{th}}}{\varsigma \eta}$. This completes the proof.

\vspace{-0.3cm}
\section{Proof of Proposition 1}\label{appendix_lemma_final_state_derivation}
First, we consider the impact of applying a depolarizing noise channel on the matter qubit. Consider the action of a depolarizing channel $\boldsymbol{\Lambda}_{p_1}$ on a general single-qubit state $\boldsymbol{\rho}$:
\begin{equation}
\boldsymbol{\Lambda}_{p_1}(\boldsymbol{\rho}) = (1-p_1)\boldsymbol{\rho} + p_1\frac{\boldsymbol{I}}{2},
\end{equation}
where $\boldsymbol{I}$ is the identity matrix for a single qubit. This essentially means that with probability $p_1$, the state is replaced by the completely mixed state $\frac{\boldsymbol{I}}{2}$, and with probability $1-p_1$, the state, $\boldsymbol{\rho}$, remains the same.

Now, considering a Bell-diagonal quantum state $\boldsymbol{\rho}_{\mathrm{BD},i}$ generated by the QBS, which is a two-qubit state, if we apply the depolarizing noise to the first qubit of the Bell-diagonal state $\boldsymbol{\rho}_{\mathrm{BD},i}$ for user $i\in\mathcal{N}$, the final resulting state $\boldsymbol{\rho}_{\mathrm{BD},i}'$ will be:
\begin{equation}\label{eq_1_noise}
\boldsymbol{\rho}_{\mathrm{BD},i}' = (1-p_{1,i})(\boldsymbol{\rho}_{\mathrm{BD},i}) + p_{1,i}(\boldsymbol{I} \otimes \boldsymbol{\rho}_{2})/2,
\end{equation}
where $\boldsymbol{\rho}_{2}$ is the reduced density matrix of the second qubit, obtained by tracing out the first qubit from $\boldsymbol{\rho}_{\mathrm{BD},i}$.

In the case of a Bell-diagonal state, $\boldsymbol{\rho}_{2}$ is a completely mixed state $\left(\frac{\boldsymbol{I}}{2}\right)$, and the state $\boldsymbol{\rho}'$ can be simplified as:
\begin{equation}\label{eq_rho_prime}
\boldsymbol{\rho}_{\mathrm{BD},i}' = \left(1-p_1\right)\boldsymbol{\rho}_{\mathrm{BD},i} + p_{1,i}\left(\frac{\boldsymbol{I} \otimes \boldsymbol{I}}{4}\right).
\end{equation}

For a single qubit, the identity operator $\boldsymbol{I}$ can be represented as:
\begin{equation}
\boldsymbol{I} = |0\rangle \langle 0| + |1\rangle \langle 1|.
\end{equation}

Accordingly, 
\begin{equation}
\boldsymbol{I} \otimes \boldsymbol{I} = |00\rangle \langle 00| + |01\rangle \langle 01| + |10\rangle \langle 10| + |11\rangle \langle 11|.
\end{equation}

Given that: 
\begin{equation}
\begin{split}
    \boldsymbol{\Phi}_{00} &= \frac{1}{2}(\ket{00}\bra{00} + \ket{00}\bra{11} + \ket{11}\bra{00} + \ket{11}\bra{11})\\
    \boldsymbol{\Phi}_{01} &= \frac{1}{2}(\ket{00}\bra{00} - \ket{00}\bra{11} - \ket{11}\bra{00} + \ket{11}\bra{11})\\
    \boldsymbol{\Phi}_{10} &= \frac{1}{2}(\ket{01}\bra{01} + \ket{01}\bra{10} + \ket{10}\bra{01} + \ket{10}\bra{10})\\
    \boldsymbol{\Phi}_{11} &= \frac{1}{2}(\ket{01}\bra{01} - \ket{01}\bra{10} - \ket{10}\bra{01} + \ket{10}\bra{10}),
\end{split}
\end{equation}
we can represent $\boldsymbol{I} \otimes \boldsymbol{I}$ in terms of the Bell states as follows (normalized): 
\begin{equation}
(\boldsymbol{I} \otimes \boldsymbol{I})/4 = (\boldsymbol{\Phi}_{00} + \boldsymbol{\Phi}_{01} + \boldsymbol{\Phi}_{10} + \boldsymbol{\Phi}_{11})/4,
\end{equation}

Given that $p_{1,i}\left(\textit{\textbf{l}}_r\right) = \left(1-e^{-\frac{t_i\left(\textit{\textbf{l}}_r\right)}{T}}\right)$, and $\boldsymbol{\rho}_{\mathrm{BD},i} = \lambda_{00,i}\boldsymbol{\Phi}_{00} + \lambda_{01,i}\boldsymbol{\Phi}_{01} + \lambda_{10,i}\boldsymbol{\Phi}_{10} + \lambda_{11,i}\boldsymbol{\Phi}_{11}$, 
by substituting into \eqref{eq_rho_prime}, and simplifying, we get:
\begin{equation}\small
\begin{split}
\boldsymbol{\rho}_{\mathrm{BD},i}'(\textit{\textbf{l}}_r)& = e^{-\frac{t_i(\textit{\textbf{l}}_r)}{T}} \times \left(\lambda_{00,i}\boldsymbol{\Phi}_{00} + \lambda_{01,i}\boldsymbol{\Phi}_{01}+ \lambda_{10,i}\boldsymbol{\Phi}_{10} \right.\\ &+ \left. \lambda_{11,i}\boldsymbol{\Phi}_{11}\right) + \left(\frac{1}{4}\left(1-e^{-\frac{t_i\left(\textit{\textbf{l}}_r\right)}{T}}\right)\right) \times (\boldsymbol{\Phi}_{00}\\ 
& + \boldsymbol{\Phi}_{01} + \boldsymbol{\Phi}_{10} + \boldsymbol{\Phi}_{11}),
\end{split}
\end{equation}
which can be rewritten concisely as:
\begin{equation}\label{simplified_eq_after_first_channel}
\begin{split}
\boldsymbol{\rho}_{\mathrm{BD},i}'(\textit{\textbf{l}}_r) &= F_{00}(\textit{\textbf{l}}_r)\boldsymbol{\Phi}_{00} + F_{01}(\textit{\textbf{l}}_r)\boldsymbol{\Phi}_{01}\\ &+ F_{10}(\textit{\textbf{l}}_r)\boldsymbol{\Phi}_{10} + F_{11}(\textit{\textbf{l}}_r)\boldsymbol{\Phi}_{11},
\end{split}
\end{equation}
where $F_{jk}(\textit{\textbf{l}}_r) = \left(\frac{1}{4} + \left(\lambda_{jk,i}-\frac{1}{4}\right)e^{-\frac{t_i\left(\textit{\textbf{l}}_r\right)}{T}} \right)$, and $F_{00}+F_{01}+F_{10}+F_{11} = 1$.

Note that the resulting state $\boldsymbol{\rho}_{\mathrm{BD},i}'$ is a valid density matrix, maintaining complete positivity and trace preservation, fundamental for the consistency of quantum mechanics.

The phase damping channel effectively flips the phase of the state with a certain probability $p_{2,i}$. Hence, we apply the phase damping channel only to the second qubit of the state $\boldsymbol{\rho}_{\mathrm{BD},i}'$, so the final E2E state $\boldsymbol{\rho}_{\mathrm{BD},i}''$ will be:
\begin{equation}
\begin{split}
\boldsymbol{\rho}_{\mathrm{BD},i}''(\textit{\textbf{l}}_r) &= \left(1-p_{2,i}\left(\textit{\textbf{l}}_r\right)\right)\left(\boldsymbol{\rho}_{\mathrm{BD},i}'\right)\\ &+ p_{2,i}\left(\textit{\textbf{l}}_r\right)\left(\boldsymbol{I} \otimes \boldsymbol{\sigma}_Z\right)\boldsymbol{\rho}_{\mathrm{BD},i}'\left(\boldsymbol{I} \otimes \boldsymbol{\sigma}_Z\right).
\end{split}
\end{equation}

First, we observe that $\boldsymbol{I} \otimes \boldsymbol{\sigma}_Z = |00\rangle\langle00| - |01\rangle\langle01| + |10\rangle\langle10| - |11\rangle\langle11|$. As such, We conclude that:   
\begin{equation}
    \begin{split}
        (\boldsymbol{I} \otimes \boldsymbol{\sigma}_Z)\boldsymbol{\Phi}_{00}(\boldsymbol{I} \otimes \boldsymbol{\sigma}_Z) &= \boldsymbol{\Phi}_{01}\\
        (\boldsymbol{I} \otimes \boldsymbol{\sigma}_Z)\boldsymbol{\Phi}_{01}(\boldsymbol{I} \otimes \boldsymbol{\sigma}_Z) &= \boldsymbol{\Phi}_{00}\\
        (\boldsymbol{I} \otimes \boldsymbol{\sigma}_Z)\boldsymbol{\Phi}_{10}(\boldsymbol{I} \otimes \boldsymbol{\sigma}_Z) &= \boldsymbol{\Phi}_{11}\\
        (\boldsymbol{I} \otimes \boldsymbol{\sigma}_Z)\boldsymbol{\Phi}_{11}(\boldsymbol{I} \otimes \boldsymbol{\sigma}_Z) &= \boldsymbol{\Phi}_{10}.
    \end{split}
\end{equation}
Accordingly, we find that:
\begin{equation}\label{second_part_after_second_noise_channel}
\begin{split}
(\boldsymbol{I} \otimes \boldsymbol{\sigma}_Z)\boldsymbol{\rho}_{\mathrm{BD},i}'(\boldsymbol{I} \otimes \boldsymbol{\sigma}_Z) &= F_{01}(\textit{\textbf{l}}_r)\boldsymbol{\Phi}_{00} + F_{00}(\textit{\textbf{l}}_r)\boldsymbol{\Phi}_{01}\\ &+ F_{11}(\textit{\textbf{l}}_r)\boldsymbol{\Phi}_{10} + F_{10}(\textit{\textbf{l}}_r)\boldsymbol{\Phi}_{11},
\end{split}
\end{equation}
which corresponds to the following expression of the E2E entangled state:
\begin{equation}\label{eq_final_state_lemma_proof}
\begin{split}
    \boldsymbol{\rho}_{i}''(\textit{\textbf{l}}_r) &= \Big[(1-p_{2,i}(\textit{\textbf{l}}_r))F_{00}(\textit{\textbf{l}}_r) + p_{2,i}(\textit{\textbf{l}}_r)F_{01}(\textit{\textbf{l}}_r))\Big]\boldsymbol{\Phi}_{00}\\ &+ \Big[(1-p_{2,i}(\textit{\textbf{l}}_r))F_{01}(\textit{\textbf{l}}_r) + p_{2,i}(\textit{\textbf{l}}_r)F_{00}(\textit{\textbf{l}}_r))\Big]\boldsymbol{\Phi}_{01}\\ 
    &+ \Big[(1-p_{2,i}(\textit{\textbf{l}}_r))F_{10}(\textit{\textbf{l}}_r) + p_{2,i}(\textit{\textbf{l}}_r)F_{11}(\textit{\textbf{l}}_r))\Big]\boldsymbol{\Phi}_{10}\\ &+ \Big[(1-p_{2,i}(\textit{\textbf{l}}_r))F_{11}(\textit{\textbf{l}}_r) + p_{2,i}(\textit{\textbf{l}}_r)F_{10}(\textit{\textbf{l}}_r))\Big]\boldsymbol{\Phi}_{11},
\end{split}
\end{equation}
which is another Bell-diagonal state that is completely-positive and trace-preserving. 

By comparison with the general formula of Bell-diagonal quantum states,  $\boldsymbol{\rho}_{i}''(\textit{\textbf{l}}_r) = \lambda_{00,i}''(\textit{\textbf{l}}_r)\boldsymbol{\Phi}_{00} + \lambda_{01,i}''(\textit{\textbf{l}}_r)\boldsymbol{\Phi}_{01} + \lambda_{10,i}''(\textit{\textbf{l}}_r)\boldsymbol{\Phi}_{10} + \lambda_{11,i}''(\textit{\textbf{l}}_r)\boldsymbol{\Phi}_{11}$, we conclude that:
\begin{equation}
    \begin{split}   \lambda_{00,i}''(\textit{\textbf{l}}_r) &= \Big[(1-p_{2,i}(\textit{\textbf{l}}_r))F_{00}(\textit{\textbf{l}}_r) + p_{2,i}(\textit{\textbf{l}}_r)F_{01}(\textit{\textbf{l}}_r))\Big]\\
         \lambda_{01,i}''(\textit{\textbf{l}}_r) &= \Big[(1-p_{2,i}(\textit{\textbf{l}}_r))F_{01}(\textit{\textbf{l}}_r) + p_{2,i}(\textit{\textbf{l}}_r)F_{00}(\textit{\textbf{l}}_r))\Big]\\
         \lambda_{10,i}''(\textit{\textbf{l}}_r) &= \Big[(1-p_{2,i}(\textit{\textbf{l}}_r))F_{10}(\textit{\textbf{l}}_r) + p_{2,i}(\textit{\textbf{l}}_r)F_{11}(\textit{\textbf{l}}_r))\Big]\\
         \lambda_{11,i}''(\textit{\textbf{l}}_r) &= \Big[(1-p_{2,i}(\textit{\textbf{l}}_r))F_{11}(\textit{\textbf{l}}_r) + p_{2,i}(\textit{\textbf{l}}_r)F_{10}(\textit{\textbf{l}}_r))\Big],
    \end{split}
\end{equation}
where $p_{2,i}\left(\textit{\textbf{l}}_r\right) = \text{erf}\left(\sigma_R^2\left(\textit{\textbf{l}}_r\right)\right)$, and $F_{jk}\left(\textit{\textbf{l}}_r\right) = \left(\frac{1}{4} + (\lambda_{jk,i}-\frac{1}{4})e^{-\frac{t_i\left(\mathbf{l}_r\right)}{T}} \right)$. This is the closed-form expression of the E2E state for user $i\in\mathcal{N}$, which is necessary to quantify its fidelity that must satisfy a minimum threshold to be useful in quantum applications.

\vspace{-0.4cm}
\bibliographystyle{IEEEtran}
\bibliography{references,quantum_noise}


 




\end{document}